\newcommand{\arxiv}[1]{arXiv:#1}
\newcommand{\ie}{\textit{i.e.}}
\newcommand{\eg}{\textit{e.g.}}
\newcommand{\etal}{\textit{et al.}}
\DeclareMathOperator{\sign}{\mathrm{sign}}
\DeclareMathOperator{\tsign}{\mathrm{tsign}}
\DeclareMathOperator{\tdet}{\mathrm{tdet}}
\DeclareMathAlphabet{\mathbfcal}{OMS}{cmsy}{b}{n}
\DeclareMathAlphabet{\mathbbold}{U}{bbold}{m}{n}
\newcommand{\R}{\mathbb{R}}
\newcommand{\puiseux}{\mathbb{K}}
\newcommand{\trop}[1][]{\ifthenelse{\equal{#1}{}}{ \mathbb{T} }{ \mathbb{T}(#1) }}
\newcommand{\strop}[1][]{{\trop[#1]}_{\pm}}
\newcommand{\transpose}[1]{#1^{\intercal}}
\newcommand{\tplus}{\oplus}  
\newcommand{\tsum}{\bigoplus}
\newcommand{\tdot}{\odot}
\newcommand{\tprod}{\bigodot}
\newcommand{\tminus}{\ominus}  
\newcommand{\zero}{\mathbbold{0}} 
\newcommand{\unit}{\mathbbold{1}} 
\newcommand{\Pcal}{\mathcal{P}}
\newcommand{\Ecal}{\mathcal{E}}
\DeclareMathOperator*{\val}{val}
\DeclareMathOperator*{\sval}{sval}
\newcommand{\Pbcal}{\mathbfcal{P}}
\newcommand{\Ebcal}{\mathbfcal{E}}  
\newcommand{\bo}[1]{\mathbold{#1}}
\DeclareMathOperator{\tropPol}{\mathrm{trop}}
\DeclareMathOperator{\new}{\mathrm{New}}
\newcommand{\eps}{\epsilon}
\newcommand{\PCBC}{\Call{PCBC}{}}
\newcommand{\tropPCBC}{\Call{TropPCBC}{}}
\newcommand{\sigmagstrop}{\rho^{\mathrm{trop}}_{\mathrm{sv}}}
\newcommand{\sigmags}{\rho_{\mathrm{sv}}}
\newcommand{\boid}{\mathbf{Id}}
\newcommand{\disjcup}{\uplus}
\newcommand{\compl}[1]{\overline{#1}}
\newcommand{\None}{\mathsf{None}}
\newcommand{\vout}{k_{\mathrm{out}}}
\newcommand{\vin}{k_{\mathrm{in}}}
\DeclareMathOperator{\tropLP}{\mathrm{LP}}
\DeclareMathOperator{\LP}{\mathbf{LP}}
\DeclareMathOperator{\diag}{\mathrm{diag}}
\DeclareMathOperator{\tdiag}{\mathrm{tdiag}}
\DeclareMathOperator{\idx}{\mathrm{idx}}
\newcommand{\raisemath}[1]{\mathpalette{\raisem@th{#1}}}
\newcommand{\raisem@th}[3]{\raisebox{#1}{$#2#3$}}
\newtheorem{theorem}{Theorem}
\newtheorem{proposition}[theorem]{Proposition}
\newtheorem{lemma}[theorem]{Lemma}
\theoremstyle{definition}
\newtheorem{assumption}{Assumption}
\theoremstyle{remark}
\newtheorem{remark}[theorem]{Remark}
\newtheorem{example}[theorem]{Example}
\tikzset{grid/.style={gray!30,very thin}}
\tikzset{axis/.style={gray!50,->,>=stealth'}}
\tikzset{convex/.style={draw=none,fill=lightgray,fill opacity=0.7}}
\tikzset{convexborder/.style={very thick}}
\tikzset{point/.style={blue!50}}
\tikzset{hs/.style={fill opacity=0.3,fill=orange,draw=none}}
\tikzset{hsborder/.style={orange,ultra thick,dashdotted}}
\title{The tropical shadow-vertex algorithm solves mean payoff games in polynomial time on average}
\date{\today}
\thanks{X.~Allamigeon and S.~Gaubert are partially supported by the PGMO program of EDF and Fondation Math\'ematique Jacques Hadamard. P.~Benchimol is supported by a PhD fellowship of DGA and \'Ecole Polytechnique.}
\author{Xavier {A}llamigeon}
\author{Pascal Benchimol}
\author{St{\'e}phane {G}aubert}
\address{INRIA and CMAP, \'Ecole Polytechnique, CNRS, 91128 Palaiseau Cedex France}
\email{firstname.lastname@inria.fr}
\begin{document}
\begin{abstract}
We introduce an algorithm which solves mean payoff games in polynomial time on average, assuming the distribution of the games satisfies a flip invariance property on the set of actions associated with every state. The algorithm is a tropical analogue of the shadow-vertex simplex algorithm, which solves mean payoff games via linear feasibility problems over the tropical semiring $(\R \cup \{-\infty\}, \max, +)$. The key ingredient in our approach is that the shadow-vertex pivoting rule can be transferred to tropical polyhedra, and that its computation reduces to optimal assignment problems through Pl\"ucker relations. 
\end{abstract}

\maketitle

\section{Introduction}
 
A \emph{mean payoff game} involves two opponents, ``Max'' and ''Min'', who alternatively move a pawn over the nodes of a weighted bipartite digraph. The latter consists of two classes of nodes, represented by squares and circles, and respectively indexed by $i \in [m]$ and $j \in [n]$ (we use the notation $[k] := \{1, \dots, k\}$). The weight of the arc $(i,j)$ (resp.\ $(j,i$)) is a real number denoted by $A_{ij}$ (resp.\ $B_{ij}$). We set $A_{ij} := -\infty$ (resp.\ $B_{ij} := -\infty$) when there is no such arc. An example of game
is given in Figure~\ref{fig:game}.

When the pawn is placed over a square node $i$, Player Max selects an outgoing arc $(i,j)$, and then moves the pawn to circle node $j$ and receives the payment $A_{ij}$ from Player Min. Conversely, when the pawn is located on a circle node $j$, Player Min chooses an arc $(j,i')$, moves the pawn to square node $i'$, and Player Max pays her the amount $B_{i'j}$. We assume that $A$ (resp.\ $B$) does not have any identically $-\infty$ row (resp.\ column), so that both players have at least one possible action at every node. The game starts from a circle node $j_0 = j$, and then the two players make infinitely many moves, visiting a sequence $j_0, i_1, j_1, i_2, \dots$ of nodes. The objective of Player Max is to maximize his mean payoff, defined as the liminf of the following ratio when $p \rightarrow + \infty$: 
\begin{equation}
 (-B_{i_1 j_0} + A_{i_1 j_1} - B_{i_2 j_1} + A_{i_2 j_2} + \dots - B_{i_p j_{p-1}} + A_{i_p j_p})/p \ . \label{eq:ratio}
\end{equation}
Symmetrically, Player Min aims at minimizing 
her
mean loss, defined as the limsup of~\eqref{eq:ratio} when $p \rightarrow + \infty$. 
Mean payoff games can be defined more generally
over arbitrary (not necessarily bipartite)
digraphs.
This situation can be reduced to the present one~\cite{ZwickPaterson1996}.

\begin{figure}[t]
\vskip-15ex
\begin{center}
\begin{tikzpicture}[scale=0.9,>=stealth',row/.style={draw,rectangle,minimum size=0.5cm},col/.style={draw,circle,minimum size=0.5cm}]
\node[row] (i1) at (0,0) {$1$};
\node[row] (i2) at (3,0) {$2$};
\node[row] (i3) at (6,0) {$3$};

\node[col] (j1) at (0,2.5) {$1$};
\node[col] (j2) at (3,2.5) {$2$};
\node[col] (j3) at (6,2.5) {$3$};

\draw[->] (i1) to[out=120,in=-120] node[above left=0ex and -0.5ex,font=\small] {$5$} (j1);
\draw[->] (i1) -- node[above right=2ex and -0.5ex,font=\small] {$-2$} (j2);

\draw[->] (j1) to[out=-60,in=60] node[below left=0ex and -0.3ex,font=\small] {$-1$} (i1);
\draw[->] (j1) -- node[below right=1.5ex and 0.5ex,font=\small] {$3$} (i2);

\draw[->] (i2) to[out=20,in=-120] node[above right=2ex and 3.5ex,font=\small] {$2$} (j3);
\draw[->] (j2) to[out=-90,in=90] node[below right=0ex and -0.5ex,font=\small] {$10$} (i2);
\draw[->] (j2) to[out=-20,in=120] node[above=0.5ex,font=\small] {$1$} (i3);

\draw[->] (j3) .. controls (-2.5,5) and (-2,1) .. node[left,font=\small] {$3$} (i1);
\draw[->] (i3) to[out=160,in=-60]  node[above left=1ex and -0.5ex,font=\small] {$0$} (j2);
\end{tikzpicture}
\end{center}
\caption{An example of mean payoff game. Circle node $1$ is a winning initial state for Player Max, while circle node $2$ is losing.} \label{fig:game}
\end{figure}

Mean payoff games were
studied by Ehrenfeucht and Mycielski~in \cite{EhrenfeuchtMycielski1979}, 
where they
proved
that these games have a value and positional optimal strategies. In more detail, for every initial state $j \in [n]$, there exists a real $\chi_j$ and  positional strategies $\sigma : [m] \rightarrow [n]$ and $\tau : [n] \rightarrow [m]$, such that:
\begin{inparaenum}[(i)]
\item Player Max is certain to win a mean payoff greater than or equal to $\chi_j$ with the strategy $\sigma$ (\ie~by choosing the arc $(i,\sigma(i))$ every time the pawn is on a square node $i \in [m]$), 
\item Player Min is sure that
her
mean loss is less than or equal to $\chi_j$ with the strategy $\tau$.
\end{inparaenum}

The decision problem associated with mean payoff games consists in determining whether the initial state $j$ is \emph{winning} for Player Max, \ie~$\chi_j \geq 0$. 
The question
of the existence of
a polynomial time algorithm solving this problem 
was
first raised by Gurvich, Karzanov and Khachiyan in~\cite{GurvichCMMP1988}. 
This problem was shown
to be in $\text{NP} \cap \text{co-NP}$ by Zwick and Paterson in~\cite{ZwickPaterson1996}. 
While mean payoff games (and the related class of parity games) received an important attention over the past years~\cite{GurvichCMMP1988,ZwickPaterson1996,Jurdzinski1998,GaubertGunawardena98,VogeJurdzinski2000,BjorklundVorobyovDAM2007,JurdzinskiPatersonZwick2008,Friedmann2009,Briem2011}, all the algorithms developed so far are superpolynomial, and the question raised by Gurvich~\etal{} is still open.

The present work exploits the equivalence between mean payoff games and linear feasibility problems in tropical algebra. Indeed, it was shown by Akian, Gaubert and Guterman in~\cite{AkianGaubertGuterman2012} that the initial state $n$ is winning for Player Max in the game with payments matrices $A,B$ if, and only if, there exists a solution $x \in (\R \cup \{-\infty\})^{n-1}$ to the following system of inequalities: 
\begin{multline}
\forall i \in [m], \qquad \max(A_{i1} + x_1, \dots, A_{i(n-1)} + x_{n-1}, A_{in}) \\
\geq \max(B_{i1} + x_1, \dots, B_{i(n-1)} + x_{n-1}, B_{in}) \ . \label{eq:trop_ineq}
\end{multline}
The constraints of the form~\eqref{eq:trop_ineq} correspond to affine inequalities over the tropical (max-plus) semiring, \ie~the set $\trop := \R \cup \{-\infty\}$ endowed with the operations $x \tplus y := \max(x,y)$ as addition, and $x \tdot y := x+y$ as multiplication. The conjunction of finitely many such inequalities defines a \emph{tropical polyhedron}. Solving a mean payoff game consequently amounts to determine whether a tropical polyhedron is empty, which can be thought of as the tropical analogue of the feasibility problem in linear programming. 
This is among the motivations leading to the development of
a tropical simplex method 
in a previous work of the authors and Joswig~\cite{AllamigeonBenchimolGaubertJoswig2013}. Then, complexity results known for the classical simplex algorithm can be potentially transferred to the tropical setting.
However, the main obstacle is to ``tropicalize'' the pivoting rule involved, \ie~to define a tropical pivoting rule which is both compatible with the classical one, and computable, if possible, with a reasonable time complexity. So far~\cite{AllamigeonBenchimolGaubertJoswig2014}, the only pivoting rules which have been tropicalized are \emph{combinatorial}, \ie~they are defined in terms of the neighborhood of the current basic point in the vertex/edge graph of the polyhedron.

\subsection*{Contributions.} We prove that the shadow-vertex simplex algorithm can be tropicalized. Following the average-case analysis of the shadow-vertex algorithm of Adler, Karp and Shamir in~\cite{AdlerKarpShamir87}, we deduce that the tropical algorithm solves mean payoff games in polynomial time on average (Section~\ref{sec:pcbc}). The complexity bound holds when the distribution of the games satisfies a \emph{flip invariance} property. The latter requires that the distribution of the games is left invariant by every transformation consisting, for an arbitrary node of the game, in flipping the orientation of all the arcs incident to this node.  Equivalently, the probability distribution on the set of payment matrices $A,B$ is invariant by every transformation consisting in swapping the $i$th row of $A$ with the $i$th row of $B$, or the $j$th column of $A$ with the $j$th column of $B$.
Figure~\ref{fig:flip} provides the illustration of a discrete distribution of games satisfying the property. 

The key difficulty in our approach is to show that the computation of the tropical shadow-vertex pivoting rule can be done in polynomial time (Section~\ref{sec:shadow_vertex}). To this end, we exploit the fact that the shadow-vertex rule is semi-algebraic, \ie~it is defined in terms of the signs of finitely many polynomials. Under some genericity conditions, we deduce that the tropical rule reduces to classical linear programs over some Newton polytopes, which are actually (Minkowski sums of) bipartite perfect matching polytopes.

\subsection*{Related work.} We are not aware of other works with such average-case complexity results on mean payoff games. In~\cite{RothSODA10}, Roth~\etal{} made a probabilistic analysis of $n \times n$ bi-matrix games with weights chosen independently uniformly in $[0,1]$. Under this assumption, they showed that with high probability (greater than $1 - f(n)$, with $f(n) = o(1/n^c)$ for all constant $c$), such games admit a pure stationary strategy equilibrium parametrized by only $4$ actions. The latter can be consequently found in polynomial time. While this result indicates that complex instances of games are rare, it does not seem to us that it can be used to deduce an average-case complexity bound.

\begin{figure}[t]
\begin{center}
\begin{tikzpicture}[scale=0.9,>=stealth',row/.style={draw,rectangle,minimum size=0.5cm},col/.style={draw,circle,minimum size=0.5cm}]

\begin{scope}
\node[row] (i1) at (0,0) {$1$};

\node[col] (j1) at (2.5,1.15) {$1$};
\node[col] (j2) at (2.5,-1.15) {$2$};

\draw[->] (i1) to[out=40,in=-166] node[above,font=\scriptsize] {$7$} (j1);
\draw[->] (j1) to[out=-140,in=14] node[below right=-0.75ex and 0.5ex,font=\scriptsize] {$5$} (i1);
\draw[->] (i1) to[out=-14,in=140] node[above right=-0.75ex and 0.5ex,font=\scriptsize] {$2$} (j2);
\draw[->] (j2) to[out=166,in=-40] node[below,font=\scriptsize] {$3$} (i1);
\node at (1.25,-2.35) 
{
$
\begin{aligned}
A & = (7 \; \, 2)\\[-0.5ex]
B & = (5 \; \, 3)
\end{aligned}
$
};
\end{scope}

\begin{scope}[xshift=4.5cm]
\node[row, very thick] (i1) at (0,0) {$1$};

\node[col] (j1) at (2.5,1.15) {$1$};
\node[col] (j2) at (2.5,-1.15) {$2$};

\draw[->] (i1) to[out=40,in=-166] node[above,font=\scriptsize] {$5$} (j1);
\draw[->] (j1) to[out=-140,in=14] node[below right=-0.75ex and 0.5ex,font=\scriptsize] {$7$} (i1);
\draw[->] (i1) to[out=-14,in=140] node[above right=-0.75ex and 0.5ex,font=\scriptsize] {$3$} (j2);
\draw[->] (j2) to[out=166,in=-40] node[below,font=\scriptsize] {$2$} (i1);
\node at (1.25,-2.35) 
{
$
\begin{aligned}
A & = (5 \; \, 3)\\[-0.5ex]
B & = (7 \; \, 2)
\end{aligned}
$
};
\end{scope}

\begin{scope}[xshift=9cm]
\node[row] (i1) at (0,0) {$1$};

\node[col, very thick] (j1) at (2.5,1.15) {$1$};
\node[col] (j2) at (2.5,-1.15) {$2$};

\draw[->] (i1) to[out=40,in=-166] node[above,font=\scriptsize] {$5$} (j1);
\draw[->] (j1) to[out=-140,in=14] node[below right=-0.75ex and 0.5ex,font=\scriptsize] {$7$} (i1);
\draw[->] (i1) to[out=-14,in=140] node[above right=-0.75ex and 0.5ex,font=\scriptsize] {$2$} (j2);
\draw[->] (j2) to[out=166,in=-40] node[below,font=\scriptsize] {$3$} (i1);
\node at (1.25,-2.35) 
{
$
\begin{aligned}
A & = (5 \; \, 2)\\[-0.5ex]
B & = (7 \; \, 3)
\end{aligned}
$
};
\end{scope}

\begin{scope}[xshift=13.5cm]
\node[row, very thick] (i1) at (0,0) {$1$};

\node[col, very thick] (j1) at (2.5,1.15) {$1$};
\node[col] (j2) at (2.5,-1.15) {$2$};

\draw[->] (i1) to[out=40,in=-166] node[above,font=\scriptsize] {$7$} (j1);
\draw[->] (j1) to[out=-140,in=14] node[below right=-0.75ex and 0.5ex,font=\scriptsize] {$5$} (i1);
\draw[->] (i1) to[out=-14,in=140] node[above right=-0.75ex and 0.5ex,font=\scriptsize] {$3$} (j2);
\draw[->] (j2) to[out=166,in=-40] node[below,font=\scriptsize] {$2$} (i1);
\node at (1.25,-2.35) 
{
$
\begin{aligned}
A & = (7 \; \, 3)\\[-0.5ex]
B & = (5 \; \, 2)
\end{aligned}
$
};
\end{scope}
\end{tikzpicture}
\end{center}
\caption{A distribution of game satisfying the flip invariance property (with $m = 1$ and $n = 2$), together with the payment matrices. The four configurations are supposed to be equiprobable. The nodes on which the flip operations have been performed are depicted in bold.}\label{fig:flip}
\end{figure}
 
\section{Preliminaries}\label{sec:preliminaries}

\subsection{Tropical arithmetic and generalized Puiseux series}

As previously discussed, the triple $(\trop, \tplus, \tdot)$ forms a semiring, and  
the elements $\zero := -\infty$ and $\unit := 0$ correspond to the zero and unit respectively. 
The tropical operations can be extended to matrices with entries in $\trop$ in a usual way, by defining $A \tplus B := (A_{ij} \tplus B_{ij})_{ij}$ and $A \tdot B := (\tsum_k A_{ik} \tdot B_{kj})_{ij}$. We also introduce the exponentiation $x^{\tdot k}$ for any $x \in \trop$ and $k \in \mathbb{N}$, which is defined as the product $x \tdot x \tdot \dots \tdot x$ of $k$ occurrences of $x$ (if $k = 0$, it is set to $\unit$). 

Even if the addition $\tplus$ does not have an inverse, it is convenient to consider tropical numbers with a negative sign. The sign is encoded in a formal way, by introducing two copies $\trop_+$ and $\trop_-$ of $\trop \setminus \{\zero\}$, respectively consisting of the positive and negative elements. The set $\strop$ of \emph{tropical signed numbers} is defined as $\trop_+ \cup \trop_- \cup \{\zero\}$. The elements of $\trop_+$ are simply denoted by elements $a \in \trop \setminus \{\zero\}$, while the elements of $\trop_-$ are denoted by $\tminus a$. 
The \emph{modulus} of $x \in \strop$ is defined as $|x| := a$ if $x = a$ or $x = \tminus a$, and $|\zero| := \zero$. Similarly, the \emph{sign} of $x \in \strop$ is defined by $\sign(x) = +1$ if $x \in \trop_+$, $\sign(x) = -1$ if $x \in \trop_-$, and $\sign(\zero) = 0$. 

While the tropical addition of signed elements may not be well defined, we can extend the multiplication over $x, y \in \strop$, by defining $x \tdot y$ as the unique element of $\strop$ with modulus $|x| \tdot |y|$ and sign $(\sign(x) \times \sign(y))$. For instance, $(\tminus 3) \tdot 4 = \tminus 7$, and $(\tminus 2) \tdot (\tminus 4) = 6$. 
The exponentiation $x^{\tdot k}$ is generalized to the case  $x \in \strop$ as well. 
For any $x \in \strop$, we use the notation $\tminus x$ as a shorthand for the operation $(\tminus \unit) \tdot x$. The \emph{positive} and \emph{negative parts} $x^+$ and $x^-$ of an element $x \in \strop$ are defined by $(x^+, x^-) := (x,\zero)$ if $x \in \trop_+$, $(\zero,\tminus x)$ if $x \in \trop_-$, and $(\zero, \zero)$ if $x = \zero$. We extend this notation to vectors and matrices entrywise.

A matrix $M \in \strop^{n \times n}$ is said to be \emph{generic} if the following maximum
\begin{equation}
\max\big\{|M_{1\sigma(1)}| \tdot \dots \tdot |M_{n\sigma(n)}| : \sigma \in S_n \big\} \label{eq:tper}
\end{equation}
is not equal to $\zero$, and is attained by only one permutation $\sigma$ in the symmetric group $S_n$. A matrix $A \in \strop^{m \times n}$ is \emph{strongly non-degenerate} if all of its square submatrices are generic. In particular, the coefficients of $A$ are not null (tropically).

\subsubsection*{Generalized Puiseux series} 

Tropical arithmetic can be intuitively illustrated by the arithmetic over asymptotic orders of magnitude. For instance, if we denote by $\Theta(t^a)$ the equivalence class of real functions of $t$ which belong to some interval $[K t^a, K' t^a]$ when $t \rightarrow +\infty$ ($0 < K \leq K'$), we have $\Theta(t^a) + \Theta(t^b) = \Theta(t^{\max(a,b)})$, and $\Theta(t^a) \times \Theta(t^b) = \Theta(t^{a+b})$. We use generalized Puiseux series as a way to manipulate such orders of magnitude in a formal way. 

A \emph{(real) generalized Puiseux series} (or \emph{Puiseux series} for short) is a formal series $\bo x$ in the indeterminate $t$ of the form $x_{\alpha_1} t^{\alpha_1} + x_{\alpha_2} t^{\alpha_2} + \dots$, where $\alpha_i \in \R$, $x_{\alpha_i} \in \R \setminus \{0\}$, and the sequence of the $\alpha_i$ is decreasing, and either finite or unbounded. The set of generalized Puiseux series forms a field that we denote $\puiseux$. Given a Puiseux series $\bo x$ as above, the largest exponent $\alpha_1$ is called the \emph{valuation} of $\bo x$, and is denoted by $\val(\bo x)$. By convention, the valuation of the null series $\bo x = 0$ is defined as $\zero = -\infty$. Thus the valuation $\val(\cdot)$ maps $\puiseux$ to $\trop$. 

A Puiseux series $\bo x$ is \emph{positive}, which is denoted by $\bo x > 0$, if the coefficient $x_{\val(x)}$ of the term with largest exponent in $\bo x$ is positive. We denote by $\leq$ the total order over $\puiseux$ defined by $\bo x \leq \bo y$ if $\bo x = \bo y$ or $\bo y - \bo x > 0$. Then, we define the \emph{signed valuation} $\sval(\bo x)$ of $\bo x$ as the element of $\strop$ given by $\val(\bo x)$ if $\bo x > 0$, $\tminus (\val(\bo x))$ if $\bo x < 0$, and $\zero$ if $\bo x = 0$. Given $x \in \strop$, we also denote by $\sval^{-1}(x)$ the set of Puiseux series $\bo x$ such that $\sval(\bo x) = x$. Valuation, signed valuation and its inverse are extended to vectors and matrices coordinate-wise. 

As discussed above, the arithmetic operations over $\puiseux$ and $\trop$ are related. For instance, for all $\bo x, \bo y \geq 0$, we have $\val(\bo x + \bo y) = \max(\val(\bo x), \val(\bo y))$. Similarly, if $\bo x, \bo y \in \puiseux$, then $\sval(\bo x \bo y) = \sval(\bo x) \tdot \sval(\bo y)$. More generally, the valuation will be used to transfer ``classical'' results to the tropical setting. In particular, convex polyhedra and linear programs over generalized Puiseux series essentially behave as over $\R$ ($\puiseux$ is a real-closed field~\cite{Markwig2007}, so Tarski's transfer principle applies). The simplex algorithm can be defined over $\puiseux$ as usual, and the valuation map will allow us to relate it with the tropical simplex algorithm.

\subsubsection*{General notations.} Given a matrix $A$ of dimension $m \times n$, and two subsets $I \subset [m]$ and $J \subset [n]$, we denote by $A_{I \times J}$ the submatrix formed by the rows and the columns of $A$ respectively indexed by $i \in I$ and $j \in J$. If $J = [n]$, we simply use the notation $A_I$. Similarly, if $i \in [n]$, $A_i$ represents the $i$-th line of $A$. The transpose matrix of $A$ is denoted by $\transpose{A}$, and the cardinality of a set $I$ by $|I|$. Given $s_1, \dots, s_n$, we denote by $\diag(s_1, \dots, s_n)$ (resp.\ $\tdiag(s_1, \dots, s_n)$) the matrix of dimension $n \times n$, with coefficients $s_i$ on the diagonal, and $0$ (resp.\ $\zero$) elsewhere.

\subsection{The tropical simplex method}\label{subsec:tropical_simplex}

The tropical simplex method, introduced in~\cite{AllamigeonBenchimolGaubertJoswig2013}, allows to solve tropical analogues of linear programming problems:
\begin{equation}
\begin{array}{r@{\quad}l}
\text{minimize} & \transpose{c} \tdot x \qquad (x \in \trop^n) \\
\text{subject to} & 
x \geq \zero , \ A^+ \tdot x \tplus b^+ \geq A^- \tdot x \tplus b^- 
\end{array} \tag*{$\tropLP(A,b,c)$} \label{eq:trop_lp}
\end{equation}
where $A \in \strop^{m \times n}$, $b \in \strop^m$, and $c \in \strop^n$. We denote by $\Pcal$ the tropical polyhedron defined by the constraints of~\ref{eq:trop_lp}. Note that the inequalities $x \geq \zero$ are trivially satisfied by any $x \in \trop^n$, hence they are superfluous. However, as we shall see, they are involved in the definition of tropical basic points, which is why we need to keep them.

\begin{figure}
\begin{center}
\begin{tikzpicture}[scale=0.65]
\draw[grid] (-0.5,-0.5) grid (9.5,7.5);
\draw[axis] (-0.5,0) -- (9.5,0) node[color=gray!50,above] {$x_1$};
\draw[axis] (0,-0.5) -- (0,7.5) node[color=gray!50,above] {$x_2$};

\draw[orange,ultra thick] (-0.5,2) -- (7,2) -- (7,-0.5);

\draw[blue!50,ultra thick] (7.5,7.5) -- (4,4) -- (4,-0.5);

\filldraw[green!70!black] (4,2) circle (4pt);
\end{tikzpicture}
\end{center}
\caption{Intersection of two tropical hyperplanes in general position. The hyperplanes are given by the equalities $x_1 = \max(x_2, 4)$ (blue) and $\max(-5 + x_1, x_2) = 2$ (orange).}\label{fig:hyperplanes}
\end{figure}

Similarly to the classical simplex method, the principle of the tropical simplex method is to pivot over the (feasible) tropical basic points, while decreasing the objective function $x \mapsto \transpose{c} \tdot x$. It handles tropical linear programs which satisfy a certain non-degeneracy assumption. Here, we will make the following sufficient assumption:
\begin{assumption}\label{assump:non_degenerate}
The matrices $\bigl(A \  b\bigr)$ and $\bigl(\begin{smallmatrix} A \\ c \end{smallmatrix}\bigr)$ are strongly non-degenerate.
\end{assumption}
In this setting, a \emph{basis} is a couple $(I,J)$ where $I \subset [m]$, $J \subset [n]$ satisfy $|I| + |J| = n$. Note that we will often manipulate $(I,J)$ through the disjoint union $I \disjcup J$ of $I$ and $J$. By a tropical analogue of Cramer theorem~\cite{PlusCDC90}, it can be shown that, under Assumption~\ref{assump:non_degenerate}, the system 
\begin{equation}
\begin{aligned}
A^+_I \tdot x \tplus b^+_I & = A^-_I \tdot x \tplus b^-_I \\
x_J & = \zero
\end{aligned} \label{eq:basic_point}
\end{equation}
admits at most one solution in $\trop^n$. If this system admits a solution $\bar{x}$, the latter is referred to as the \emph{basic point} associated with the basis $(I,J)$. When $\bar{x}$ belongs to $\Pcal$, it is called a \emph{feasible basic point}, and $(I,J)$ is a \emph{feasible basis}. 

\begin{remark}
Every equality in the system described in~\eqref{eq:basic_point} defines a \emph{tropical hyperplane}. Assumption~\ref{assump:non_degenerate} ensures that these hyperplanes are in general position, so that the intersection of $n$ such hyperplanes is either empty, or reduced to a singleton. We refer to Figure~\ref{fig:hyperplanes} for an illustration. This provides a geometric interpretation of basic points in terms of the arrangement of the tropical hyperplanes associated with the system $x \geq \zero$, $A^+ \tdot x \tplus b^+ \geq A^- \tdot x \tplus b^-$.
\end{remark}

The execution of the tropical simplex method on~\ref{eq:trop_lp} is related with the execution of the classical simplex method on a lifting linear program over Puiseux series. More precisely, a \emph{lift} of~\ref{eq:trop_lp} is a linear program over Puiseux series of the form:
\begin{equation}
\begin{array}{r@{\quad}l}
\text{minimize} & \transpose{\bo c} \bo x \qquad (\bo x \in \puiseux^n) \\
\text{subject to} & 
\bo x \geq 0 , \ \bo A \bo x + \bo b \geq 0 
\end{array} \tag*{$\LP(\bo A,\bo b,\bo c)$} \label{eq:lp}
\end{equation}
where $\bo A \in \sval^{-1}(A)$, $\bo b \in \sval^{-1}(b)$, and $\bo c \in \sval^{-1}(c)$. Let us denote by $\Pbcal$ the convex polyhedron defined by the inequalities of~\ref{eq:lp}. Then, $\Pcal$ and $\Pbcal$ have precisely the same (feasible) bases, and the map $\bo x \mapsto \val(\bo x)$ induces a one-to-one correspondence between their basic points~\cite[Proposition~17]{AllamigeonBenchimolGaubertJoswig2013}. Besides, if $\bo x^* \in \Pbcal$ minimizes the function $\bo x \mapsto \transpose{\bo c} \bo x$, then $\val(\bo x^*) \in \Pcal$ minimizes $x \mapsto \transpose{c} \tdot x$. Note that under Assumption~\ref{assump:non_degenerate}, the linear program $\LP(\bo A,\bo b,\bo c)$ is non-degenerate, in the sense that no minor of $\bigl(\bo A \  \bo b\bigr)$ and $\bigl(\begin{smallmatrix} \bo A \\ \bo c \end{smallmatrix}\bigr)$ is null.

Moreover, both simplex methods iterate over basic points in the same way. Starting from a basic point of basis $(I,J)$, they pivot to an adjacent basic point associated with a basis $(I', J')$ such that $I' \disjcup J' = (I \disjcup J) \setminus \{\vout\} \cup \{\vin\}$, for some $\vout \in I \disjcup J$, $\vin \not \in I \disjcup J$. The element $\vout$ is called the \emph{leaving variable}, and is provided by the pivoting rule. The integer $\vin$ 
is uniquely determined when the problem is non-degenerate.\footnote{Our terminology follows the one of the dual simplex method. This inversion comes from the fact that our notion of basis actually corresponds to the set of ``non-basic'' variables in a linear program written with slack variables.}

As a consequence, under Assumption~\ref{assump:non_degenerate}, if the classical and tropical simplex methods both start from the basis $(I,J)$ and select the same leaving variable $\vout$, they pivot to the same basis $(I',J')$. In other words, the tropical simplex method traces the image under the map $\val(\cdot)$ of the path followed by the classical simplex method, provided that they use compatible pivoting rules, \ie\ at any feasible basis, both rules select the same leaving variable. 

We recall that, given a tropical basic point with basis $(I,J)$ and a leaving variable $\vout$, the operation of pivoting to the next tropical basic point can be done in time $O(n(m+n))$~\cite[Theorem~33]{AllamigeonBenchimolGaubertJoswig2013}. However, this pivoting operation is limited to basic points with no $-\infty$ coefficients. We present in Figure~\ref{fig:tropical_pivoting} a simpler, but more expensive, pivoting operation, which handles the general case. The algorithm $\Call{Pivot}{(I,J),\vout}$ consists in testing all the $m$ possibly entering variables $\vin$ in the complement of $I \disjcup J$. By the non-degeneracy assumptions, only one variable can lead to a feasible basis. Each candidate basic point can be computed in time $O(n^3)$ (see~\cite{PlusCDC90}), and the feasibility can be checked in $O(m n)$ by testing the $m$ inequalities in $A^+ \tdot x \tplus b^+ \geq A^- \tdot x \tplus b^- $. The total complexity of our pivoting operation is therefore in $O(m n(m + n^2))$. 

\begin{figure}[t]
\begin{small}
\begin{algorithmic}[1]
\Procedure {Pivot}{$(I,J), \vout$}
\ForAll{$\vin \in ([m] \setminus I) \disjcup ([n] \setminus J)$}
    \State Let $(I',J'$) be defined by $I' \disjcup J' = (I \disjcup J) \setminus \{\vout\} \cup \{\vin\}$
    \If{the system $A^+_{I'} \tdot x \tplus b^+_{I'} = A^-_{I'} \tdot x \tplus b^-_{I'}, \ x_{J'} = \zero$ admits a solution $\bar{x} \in \trop^n$, and $\bar{x} \in \Pcal$}
        \State \Return $(I',J')$
    \EndIf
\EndFor
\EndProcedure
\end{algorithmic}
\end{small}
\caption{A naive implementation of the tropical pivoting operation}\label{fig:tropical_pivoting}
\end{figure}

\begin{remark}\label{remark:edge}
The geometric interpretation of the pivoting operation in the tropical setting is analogous to the classical setting. Let $x$ and $y$ be two adjacent tropical basic points, respectively associated with the bases $(I,J)$ and $(I',J')$, so that $I' \disjcup J' = (I \disjcup J) \setminus \{\vout\} \cup \{\vin\}$ for some $\vout \in I \disjcup J$, $\vin \not \in I \disjcup J$. Let $K \subset [m]$ and $L \subset [n]$ such that $K \disjcup L = (I \disjcup J) \setminus \{ \vout \}$. Then, the set $\Ecal$ of points $z \in \Pcal$ which satisfy the equalities 
\[
\begin{aligned}
A^+_K \tdot z \tplus b^+_K & = A^-_K \tdot z \tplus b^-_K \\
z_L & = \zero
\end{aligned}
\]
is called a \emph{tropical edge} of $\Pcal$. Geometrically, it coincides with the \emph{tropical segment} between the two points $x,y$, which consists of the set $\{\lambda \tdot x \tplus \mu \tdot y : \lambda, \mu \in \trop, \ \lambda \tplus \mu = \unit\}$. As shown in~\cite[Proposition~18]{AllamigeonBenchimolGaubertJoswig2013}, this tropical edge is equal to the image under the map $\val(\cdot)$ of the edge $\Ebcal$ of the polyhedron $\Pbcal$ which connects the two basic points associated with the bases $(I,J)$ and $(I', J')$. 
\end{remark}

\begin{example}
We provide in Figure~\ref{fig:tropical_polyhedron} an example of tropical polyhedron in dimension $2$. It is defined by the following five inequalities:
\[
\begin{aligned}
\max(x,1 + y) & \geq 3 \\ 
y & \geq \max(-10 + x, 1) \\
\max(y, 4) & \geq -3 + x \\ 
8 & \geq \max(x, 2 + y) \\
4 + x & \geq \max(y, 5) 
\end{aligned}
\]
or, equivalently, by the system $x \geq \zero, \ A^+ \tdot x \tplus b^+ \geq A^- \tdot x \tplus b^-$ where:
\[
A = \begin{pmatrix} 0 & 1 \\ \tminus (-10) & 0 \\ \tminus (-3) & 0 \\ \tminus 0 & \tminus 2 \\ 4 & \tminus 0 \end{pmatrix} \qquad 
b = \begin{pmatrix} \tminus 3 \\ \tminus 1 \\ 4 \\ 8 \\ \tminus 5\end{pmatrix} \ .
\]
Note that the inequalities $x \geq \zero$ are never active (our polyhedron is bounded in $\R^2$), so that all feasible bases are of the form $(I, \emptyset)$, where $I \subset [5]$ has cardinality $2$. The basic points associated with the bases $(\{2,3\}, \emptyset)$ and $(\{3, 4\}, \emptyset)$ are depicted in blue and orange respectively. The tropical edge between them is represented in green.
\end{example}

\begin{figure}
\begin{center}
\begin{tikzpicture}[scale=0.65]
\draw[grid] (-0.5,-0.5) grid (9.5,7.5);
\draw[axis] (-0.5,0) -- (9.5,0) node[color=gray!50,above] {$x_1$};
\draw[axis] (0,-0.5) -- (0,7.5) node[color=gray!50,above] {$x_2$};

\filldraw[convex] (1,2) -- (3,2) -- (3,1) -- (7,1) -- (7,4) -- (8,5) -- (8,6) -- (2,6) -- (1,5) -- cycle;
\draw[convexborder] (1,2) -- (3,2) -- (3,1) -- (7,1) -- (7,4) -- (8,5) -- (8,6) -- (2,6) -- (1,5) -- cycle;

\draw[green!70!black, line width=1.5pt] (7,1) -- (7,4) -- (8,5);
\filldraw[point] (7,1) circle (4pt);
\filldraw[orange] (8,5) circle (4pt);

\end{tikzpicture}
\end{center}
\caption{A tropical polyhedron (in gray), two basic points (in blue and orange) and a tropical edge (in green).}\label{fig:tropical_polyhedron}
\end{figure}

\section{Tropicalizing the shadow-vertex simplex algorithm}\label{sec:shadow_vertex}

\subsection{The classical shadow-vertex pivoting rule}

Given $\bo u, \bo v \in \puiseux^{n}$, the shadow-vertex rule aims at solving the following parametric family of linear programs for increasing values of $\bo \lambda \geq 0$:
\begin{equation}
\begin{array}{r@{\quad}l}
\text{minimize} & (\transpose{\bo u} - \bo \lambda \transpose{\bo v}) \bo x \qquad (\bo x \in \puiseux^n) \\
\text{subject to} & 
\bo x \geq 0 , \ \bo A \bo x + \bo b \geq 0 
\end{array}  \label{eq:parametric_lp}
\end{equation}
The vectors $\bo u$ and $\bo v$ are respectively called \emph{objective} and \emph{co-objective} vectors. The input of~\eqref{eq:parametric_lp} is supposed to satisfy a genericity property. Here, we will assume that no minor of $\bigl(\bo A \ \bo b \bigr)$ and $\bigl(\transpose{\bo A} \ \bo u \ \bo v\bigr)$ is null. 

When $\bo \lambda$ is continuously increased from $0$, the basic points of $\Pbcal$ minimizing the function $\bo x \mapsto (\transpose{\bo u} - \bo \lambda \transpose{\bo v}) \bo x$ form a sequence $\bar{\bo x}^0, \dots, \bar{\bo x}^p$. 
The shadow-vertex rule amounts to iterate over this sequence. It relies on the reduced cost vectors w.r.t.\ the objective and co-objective vectors. 
Given a basis $(I,J)$, the \emph{reduced cost vector} $\bo y^{(I,J)} \in \puiseux^{I \disjcup J}$ w.r.t.\ the objective vector $\bo u$ is defined as the unique solution $\bo y$ of the system $\transpose{(\bo A'_{I \disjcup J})} \bo y = \bo u$, where $\bo A' = \begin{psmallmatrix} \boid \\ \bo A\end{psmallmatrix}$, and $\boid$ is the identity matrix.
The reduced cost vector $\bo z^{(I,J)}$ w.r.t.\ the co-objective vector $\bo v$ can be defined similarly, by replacing $\bo u$ by $\bo v$. 
Then, at basis $(I,J)$, the shadow-vertex rule selects the leaving variable $\vout \in I \disjcup J$ such that $\bo y^{(I,J)}_{\vout} > 0$, $\bo z^{(I,J)}_{\vout} > 0$, and:
\begin{equation}
\bo y^{(I,J)}_{\vout} / \bo z^{(I,J)}_{\vout} =
\min 
\bigl\{ 
\bo y^{(I,J)}_l / \bo z^{(I,J)}_l : l \in I \disjcup J, \ \bo y^{(I,J)}_l > 0 \ \text{and} \ \bo z^{(I,J)}_l > 0
\bigr\} \ . \label{eq:shadow_vertex}
\end{equation}
Note that $\vout$ is unique under the non-degeneracy assumptions. We refer to~\cite[Chapter~1.3]{Borgwardt1987} for a proof of~\eqref{eq:shadow_vertex}. 
In the following, we will denote by $\sigmags(\bo A, \bo u, \bo v)$ the function which, given a basis $(I,J)$, returns the leaving variable provided by the shadow-vertex rule with objective and co-objective vectors $\bo u$ and $\bo v$. If there is no $l \in I \disjcup J$ such that $\bo y^{(I,J)}_l > 0$ and $\bo z^{(I,J)}_l > 0$, the function $\sigmags(\bo A, \bo u, \bo v)(I,J)$ will be supposed to return the  special value $\None$. It can be shown that this happens if, and only if, the basic point associated with the basis $(I,J)$ maximizes the co-objective function $\bo x \mapsto \transpose{\bo v} \bo x$ (see~\cite[Chapter~1.3]{Borgwardt1987}).

\subsection{Semi-algebraic pivoting rules and their tropical counterpart}

In this section, we focus on the problem of finding a tropical pivoting rule $\sigmagstrop$ compatible with the shadow-vertex rule $\sigmags$. More formally, we aim at defining a function $\sigmagstrop(A,u,v)$ parametrized by a tropical matrix $A \in \strop^{m \times n}$, and objective and co-objective vectors $u, v \in \strop^{n}$, such that:
\[
\sigmagstrop(A, u, v)(I,J) = \sigmags(\bo A, \bo u, \bo v)(I,J) \qquad \text{for any basis} \ (I,J) \ ,
\]
for all $\bo A \in \sval^{-1}(A)$, $\bo u \in \sval^{-1}(u)$, and $\bo v \in \sval^{-1}(v)$. In this case, $\sigmagstrop$ will be said to be \emph{compatible with} $\sigmags$ \emph{on the instance} $(A,u,v)$.

\subsubsection*{Tropical polynomials.}

The connection we use between the classical and tropical worlds relies on polynomials over generalized Puiseux series. 

Let $P \in \puiseux[X_1, \dots, X_p]$ be a multivariate polynomial, and suppose that it is of the form $\sum_{\alpha \in S} \bo c_\alpha X_1^{\alpha_1} \dots X_p^{\alpha_p}$, where $S \subset \mathbb{N}^{p}$, and $\bo c_\alpha \in \puiseux \setminus \{0\}$ for all $\alpha \in S$. The set $S$ is called the \emph{support} of $P$. We associate a tropical polynomial $\tropPol(P) \in \strop[][X_1, \dots, X_p]$ defined as the following formal expression:
\[
\tropPol(P) := \tsum_{\alpha \in S} c_\alpha \tdot X_1^{\tdot \alpha_1} \tdot \dots \tdot X_p^{\tdot \alpha_p} \ ,
\]
with $c_\alpha := \sval(\bo c_\alpha)$ for all $\alpha \in S$. Given $x \in \strop^{p}$, we say that the polynomial $\tropPol(P)$ \emph{vanishes} on $x$ if the following maximum
\begin{equation}
\max\bigl\{ |c_\alpha| \tdot |x_1|^{\tdot \alpha_1} \tdot \dots \tdot |x_p|^{\tdot \alpha_p} : \alpha \in S \bigr\} \label{eq:poly_modulus}
\end{equation}
is reached at least twice, or is equal to $\zero$. If $\tropPol(P)$ does not vanish on $x$, we define $\tropPol(P)(x) \in \strop$ as follows:
\[
\tropPol(P)(x)  := c_{\alpha^*} \tdot x_1^{\tdot \alpha^*_1} \tdot \dots \tdot x_p^{\tdot \alpha^*_p} \ ,
\]
where $\alpha^*$ is the unique element of $S$ reaching the maximum in~\eqref{eq:poly_modulus}. 
The following lemma relates the values of $P$ and $\tropPol(P)$: 
\begin{lemma}\label{lemma:poly_sign}
Let $x \in \strop^{p}$, and suppose that $\tropPol(P)$ does not vanish on $x$. Then, for any $\bo x \in \sval^{-1}(x)$, we have $\sval(P(\bo x)) = \tropPol(P)(x)$. In particular, the sign of $P(\bo x)$ is equal to the sign of $\tropPol(P)(x)$.
\end{lemma}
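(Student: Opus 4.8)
The plan is to expand $P(\bo x)$ into its finitely many monomials, single out the one that strictly dominates all the others in valuation — namely the monomial indexed by the unique maximizer of~\eqref{eq:poly_modulus} — and then use the fact that the signed valuation of a sum of Puiseux series is unaffected by adding series of strictly smaller valuation. This yields the claimed identity at once, and the sign statement follows since $\sign(\cdot) = \sign(\sval(\cdot))$.

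First I would fix notation: write $P = \sum_{\alpha \in S} \bo c_\alpha X_1^{\alpha_1} \cdots X_p^{\alpha_p}$ with $S \subset \mathbb{N}^p$ finite and $\bo c_\alpha \neq 0$, so that $P(\bo x) = \sum_{\alpha \in S} \bo m_\alpha$ where $\bo m_\alpha := \bo c_\alpha \bo x_1^{\alpha_1} \cdots \bo x_p^{\alpha_p}$. Using the multiplicativity of $\sval$ recalled in Section~\ref{sec:preliminaries}, extended to finite products by induction, together with $\sval(\bo x_i) = x_i$, one gets $\sval(\bo m_\alpha) = c_\alpha \tdot x_1^{\tdot \alpha_1} \tdot \cdots \tdot x_p^{\tdot \alpha_p}$. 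Taking moduli, $|\sval(\bo m_\alpha)|$ is precisely the $\alpha$-th term appearing inside the maximum~\eqref{eq:poly_modulus}; hence $\val(\bo m_\alpha)$ equals that term (with the convention $\val(0) = \zero$), and $\bo m_\alpha = 0$ if and only if this term is $\zero$ — which, since the $\alpha_i$ are nonnegative integers and $|c_\alpha| \neq \zero$, happens exactly when $x_i = \zero$ for some $i$ with $\alpha_i > 0$.

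Next I would invoke the hypothesis that $\tropPol(P)$ does not vanish on $x$: the maximum in~\eqref{eq:poly_modulus} is different from $\zero$ and is attained by a single index $\alpha^* \in S$. By the computation above this means $\bo m_{\alpha^*} \neq 0$ and $\val(\bo m_{\alpha^*}) > \val(\bo m_\alpha)$ for every $\alpha \in S \setminus \{\alpha^*\}$. The last ingredient is the elementary fact that if $\bo y, \bo y'$ are Puiseux series with $\val(\bo y') < \val(\bo y)$, then $\sval(\bo y + \bo y') = \sval(\bo y)$: the coefficient of $t^{\val(\bo y)}$ in $\bo y + \bo y'$ is that of $\bo y$, and there is no term of larger exponent, so both the valuation and the sign of the leading coefficient are unchanged. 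Applying this with $\bo y = \bo m_{\alpha^*}$ and $\bo y' = \sum_{\alpha \in S \setminus \{\alpha^*\}} \bo m_\alpha$ gives $\sval(P(\bo x)) = \sval(\bo m_{\alpha^*}) = c_{\alpha^*} \tdot x_1^{\tdot \alpha^*_1} \tdot \cdots \tdot x_p^{\tdot \alpha^*_p} = \tropPol(P)(x)$, and in particular $\sign(P(\bo x)) = \sign(\sval(P(\bo x))) = \sign(\tropPol(P)(x))$.

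There is no genuine obstacle here — the argument is bookkeeping built entirely on the homomorphism properties of $\val$ and $\sval$ from Section~\ref{sec:preliminaries}. The only points requiring a little care are the degenerate cases where some $\bo x_i = 0$ (handled by the observation above that the corresponding monomials $\bo m_\alpha$ and the matching terms of~\eqref{eq:poly_modulus} vanish simultaneously, so such $\alpha$ cannot be the dominant index $\alpha^*$) and the routine verification of the "strictly smaller valuation" fact for signed series, which is immediate from the definitions of the total order on $\puiseux$ and of $\sval$.
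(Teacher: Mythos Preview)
Your argument is correct and follows essentially the same route as the paper's own proof: compute the (signed) valuation of each monomial $\bo m_\alpha$, use the non-vanishing hypothesis to isolate a unique monomial of strictly largest valuation, and conclude that the signed valuation of the sum is that of this dominant term. Your write-up is slightly more detailed (in particular you spell out the degenerate case $x_i = \zero$ and the ``strictly smaller valuation'' fact explicitly), but the underlying idea is identical.
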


\begin{proof}
Let $\bo x \in \sval^{-1}(x)$. Given $\alpha \in S$, the valuation of the term $\bo c_\alpha \bo x_1^{\alpha_1} \dots \bo x_p^{\alpha_p}$ is equal to $|c_\alpha| \tdot |x_1|^{\tdot \alpha_1} \tdot \dots \tdot |x_p|^{\tdot \alpha_p}$. Similarly, its signed valuation is given by $c_\alpha \tdot x_1^{\tdot \alpha_1} \tdot \dots \tdot x_p^{\tdot \alpha_p}$. As the maximum in~\eqref{eq:poly_modulus} is reached by only one element $\alpha^* \in S$, we deduce that $\bo c_{\alpha^*} \bo x_1^{\alpha^*_1} \dots \bo x_p^{\alpha^*_p}$ is the unique monomial with largest valuation in $P(\bo x)$. Thus, the signed valuation of $P(\bo x)$ coincides with the signed valuation of $\bo c_{\alpha^*} \bo x_1^{\alpha^*_1} \dots \bo x_p^{\alpha^*_p}$, and is equal to $\tropPol(P)(x) = c_{\alpha^*} \tdot x_1^{\tdot \alpha^*_1} \tdot \dots \tdot x_p^{\tdot \alpha^*_p}$.
\end{proof}

Following this, we can introduce determinants of tropical matrices. Let us define 
\[
\tdet_n(X) := \tsum_{\sigma \in S_n} \tsign(\sigma) \tdot X_{1 \sigma(1)} \tdot \dots \tdot X_{n\sigma(n)} \ ,
\]
where $\tsign(\sigma) := \unit$ if the permutation $\sigma$ is even, $\tminus \unit$ otherwise. The polynomial $\tdet_n$ is simply denoted $\tdet$ when there is no ambiguity. If $M \in \strop^{n \times n}$, the \emph{tropical determinant} of $M$ is defined as $\tdet(M)$ when the polynomial $\tdet$ does not vanish on $M$. Note that the latter condition is equivalent to the fact that $M$ is generic. In this case, $\tdet(M)$ can be computed in time complexity $O(n^3)$, by solving an assignment problem over the bipartite graph with node set $[n] \disjcup [n]$, in which every arc $(i,j)$ is equipped with the weight $|M_{ij}|$. Indeed, the maximum weight matching provides the unique permutation $\sigma^* \in S_n$ reaching the maximum in~\eqref{eq:tper}, and by definition, $\tdet(M)$ is given by $\tsign(\sigma^*) \tdot M_{1 \sigma^*(1)} \tdot \dots \tdot M_{n \sigma^*(n)}$. 

\begin{example}
The determinant of $(2 \times 2)$-matrices is given by the polynomial ${\det}_2 = X_{1,1} X_{2,2} - X_{1,2} X_{2,1}$, and the corresponding tropical polynomial is $\tdet_2 = (X_{1,1} \tdot X_{2,2}) \tplus (\tminus (X_{1,2} \tdot X_{2,1}))$. Let us consider the tropical matrix 
$M := 
\begin{psmallmatrix}
3 & \tminus 2 \\
1 & \tminus 1 
\end{psmallmatrix}$, and let $\bo M := \begin{psmallmatrix}
2 t^{3} + \cdots & -t^{2} + \cdots \\
4 t + \cdots & -9 t + \cdots  
\end{psmallmatrix}$ be an arbitrary lift of $M$ (the dots represent terms of the series which have a smaller exponent, which we left unspecified). The tropical polynomial $\tdet_2$ does not vanish on $M$, since we have:
\begin{equation}
\max(|M_{1,1}| \tdot |M_{2,2}|, |M_{1,2}| \tdot |M_{2,1}|) = \max(3 \tdot 1, 1 \tdot 2) = \max(4,3) \ . \label{eq:example}
\end{equation}
Hence, the term reaching the maximum in~\eqref{eq:example} is associated with the monomial $X_{1,1} \tdot X_{2,2}$ of $\tdet_2$, so that $\tdet_2(M) = M_{1,1} \tdot M_{2,2} = \tminus 4$. On the other hand, the determinant of $\bo M$ is of the form $-18 t^{4} + \cdots + 4 t^{3} + \cdots$. Consequently, we indeed have $\sval(\det_2(\bo M)) = \tdet_2(M)$, as expected.

Moreover, the term $|M_{1,1}| \tdot |M_{2,2}|$ attaining the maximum in~\eqref{eq:example} is given by the maximum weight assignment $(1,1), (2,2)$ in the following bipartite graph with weights $|M_{ij}|$:

\begin{center}
\begin{tikzpicture}[scale=0.85,row/.style={draw,cross out,inner sep=0.5ex},col/.style={fill,circle,inner sep=0.5ex}]
\path node[row] (h1) at (0,0) {} (h1) node[left=0.25ex,font=\footnotesize] {$1$};
\path node[col] (j1) at (2,0) {} (j1) node[right=0.25ex,font=\footnotesize] {$1$};
\path node[row] (h2) at (0,-2) {} (h2) node[left=0.25ex,font=\footnotesize] {$2$};
\path node[col] (j2) at (2,-2) {} (j2) node[right=0.25ex,font=\footnotesize] {$2$};

\draw (h1.center) -- node[above] {$3$} (j1.center) -- node[above right=-0.5ex and 1.5ex] {$1$} (h2.center) -- node[below] {$1$} (j2.center) -- node[above left=-0.5ex and 1.5ex] {$2$} (h1.center);
\end{tikzpicture}
\end{center}
\end{example}

\subsubsection*{The shadow-vertex rule as a semi-algebraic rule.} We claim that the shadow-vertex rule is a \emph{semi-algebraic rule}, in the sense that the leaving variable returned by $\sigmags(\bo A, \bo u, \bo v)(I,J)$ only depends on the current basis $(I,J)$ and on the signs of finitely many polynomials taken on the matrix $\bo M := \begin{psmallmatrix}\bo A \\ \transpose{\bo u} \\ \transpose{\bo v}\end{psmallmatrix}$. To make the notations simpler, we fix a basis $(I,J)$, and we respectively denote by $\bo y$ and $\bo z$ the reduced cost vectors $\bo y^{(I,J)}$ and $\bo z^{(I,J)}$. We also define $\compl{J} := [n] \setminus J$.

Let us denote by $P_{K \times L}$ the polynomial given by the $(K \times L)$-minor of the matrix $X = (X_{ij})$ of formal variables, for any $K \subset [m+2]$ and $L \subset [n]$ such that $|K| = |L|$. For instance, if $K = \{1,2\}$ and $L = \{3,4\}$, $P_{K \times L}$ is given by the determinant of the submatrix $\begin{psmallmatrix} 
X_{1,3} & X_{1,4} \\
X_{2,3} & X_{2,4}
\end{psmallmatrix}$, \ie~$P_{K \times L} = X_{1,3} X_{2,4} - X_{2,3} X_{1,4}$.  For all $l \in I \disjcup J$, we define two polynomials $Q_l$ and $R_l$ as follows:  
\begin{align*}
Q_i & := P_{(I \setminus \{i\} \cup \{m+1\}) \times \compl{J}}  & 
R_i & := P_{(I \setminus \{i\} \cup \{m+2\}) \times \compl{J}} && \text{when} \ i \in I \ , \\
Q_j & := P_{(I \cup \{m+1\}) \times (\compl{J} \cup \{j\})}  &
R_j & := P_{(I \cup \{m+2\}) \times (\compl{J} \cup \{j\})} 
&& \text{when} \ j \in J \  .
\end{align*}

\begin{lemma}\label{lemma:reduced_cost}
For all $l \in I \disjcup J$, 
\[
\bo y_l = s_l \, Q_l(\bo M) / P_{I \times \compl{J}}(\bo M) \ , \qquad
\bo z_l = s_l \, R_l(\bo M) / P_{I \times \compl{J}}(\bo M) \ ,
\]
where $s_l$ is a constant in $\{\pm 1\}$ which only depends on the integer $l$ and the sets $I$ and $J$.
\end{lemma}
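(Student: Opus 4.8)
The plan is to compute the reduced cost vectors via Cramer's rule and then recognize the resulting determinants as the minors $Q_l$ and $R_l$, up to an overall sign that depends only on the positions being manipulated. Recall that $\bo y = \bo y^{(I,J)}$ is by definition the unique solution of $\transpose{(\bo A'_{I \disjcup J})} \bo y = \bo u$, where $\bo A' = \begin{psmallmatrix} \boid \\ \bo A\end{psmallmatrix}$. The rows of $\bo A'$ indexed by $I \disjcup J$ are: the standard basis vectors $\transpose{e_j}$ for $j \in J$ (coming from the identity block), and the rows $\bo A_i$ for $i \in I$ (coming from the $\bo A$ block). Since the $j$-th standard basis row simply reads off the $j$-th coordinate, the system decouples: the equations indexed by $j \in J$ impose nothing on $\bo y$ beyond bookkeeping, while the genuine content is carried by the columns outside $J$. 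More precisely, after deleting the trivial rows/columns, $\bo y$ is governed by the square system whose matrix is $\transpose{(\bo A'_{I \disjcup J})}$ restricted to the columns $\compl{J}$, which is exactly $\transpose{(\bo A_{I \times \compl{J}})}$ together with the relevant rows of $\bo u$. Its determinant is $\pm P_{I \times \compl{J}}(\bo M)$, which is nonzero by the genericity hypothesis that no minor of $\bigl(\transpose{\bo A} \ \bo u \ \bo v\bigr)$ vanishes.

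Concretely, I would proceed as follows. First, fix $l \in I \disjcup J$ and split into the two cases $l = i \in I$ and $l = j \in J$. In the case $i \in I$, Cramer's rule expresses $\bo y_i$ as the ratio of two determinants: the denominator is (a signed version of) $P_{I \times \compl{J}}(\bo M)$, and the numerator is the determinant of the matrix obtained by replacing the row corresponding to index $i$ (which is the row $\transpose{\bo A_i}$, i.e.\ column $i$ of the relevant block) by the vector $\bo u$ — but $\bo u$ is precisely the row $m+1$ of $\bo M$. Reading this numerator back as a minor of $X$ evaluated at $\bo M$, it is $\pm P_{(I \setminus \{i\} \cup \{m+1\}) \times \compl{J}}(\bo M) = \pm Q_i(\bo M)$. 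The sign discrepancy between "the Cramer numerator/denominator as written" and "the minors $P_{K\times L}$ as defined via the determinant of the submatrix with rows/columns in increasing order" is a permutation sign that depends only on where $i$ sits relative to the other elements of $I$ and where $\compl J$ sits — hence only on $l$, $I$, $J$; this is the constant $s_l$. The case $j \in J$ is handled analogously: now the index $j$ lives in the identity block, so in the reduced square system one adds back the column $j$ and the row $\transpose{e_j}$; replacing the appropriate row by $\bo u$ yields the minor $P_{(I \cup \{m+1\}) \times (\compl{J} \cup \{j\})}(\bo M) = Q_j(\bo M)$ up to the sign $s_l$. The formula for $\bo z_l$ is identical with $\bo u$ replaced by $\bo v = $ row $m+2$ of $\bo M$, so the same computation gives $\bo z_l = s_l\, R_l(\bo M)/P_{I\times\compl J}(\bo M)$ with \emph{the same} sign $s_l$, since $s_l$ came only from the positions and not from which of $\bo u$ or $\bo v$ was substituted.

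The main obstacle I expect is purely bookkeeping: carefully tracking that the sign $s_l$ is genuinely independent of everything except $l$, $I$, $J$ — in particular that it does not depend on the entries of $\bo A$, $\bo u$, $\bo v$, and that it is shared by the $\bo y_l$ and $\bo z_l$ formulas. This requires being precise about the Laplace/cofactor expansion conventions and the ordering conventions implicit in the definition of $P_{K \times L}$ as "the determinant of the submatrix indexed by $K$ and $L$" (rows and columns taken in increasing order). Once the conventions are pinned down, $s_l$ is an explicit product of $(-1)$ to a power computed from the rank of $l$ within $I\disjcup J$ and the position of the substituted row $m+1$ (or $m+2$) and column $j$; the fact that $m+1$ and $m+2$ occupy the \emph{same} relative position among the chosen rows (both are larger than every element of $I$, and $j$ is chosen the same way in both) is exactly what forces the $\bo y$ and $\bo z$ signs to coincide. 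No deeper idea is needed; the genericity hypothesis guarantees the denominator $P_{I\times\compl J}(\bo M)$ is nonzero so all ratios are well defined.
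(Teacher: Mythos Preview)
Your proposal is correct and follows essentially the same approach as the paper: decouple the system $\transpose{(\bo A'_{I \disjcup J})}\bo y = \bo u$ into the $\compl J$-block $\transpose{(\bo A_{I\times\compl J})}\bo y_I=\bo u_{\compl J}$ and the $J$-block $\bo y_J+\transpose{(\bo A_{I\times J})}\bo y_I=\bo u_J$, apply Cramer's rule to the first block to get $\bo y_i$ for $i\in I$, and then recover $\bo y_j$ for $j\in J$ by recognizing the resulting expression as a Laplace expansion of the larger minor $P_{(I\cup\{m+1\})\times(\compl J\cup\{j\})}$. Your explicit remark that $s_l$ is shared between the $\bo y$ and $\bo z$ formulas because $m+1$ and $m+2$ sit in the same position relative to $I$ is a useful point that the paper leaves implicit.
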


\begin{proof}
We restrict our attention to the vector $\bo y$ (the proof is similar for the vector $\bo z$). Recall that $\bo y$ is given by the following system:
\begin{align*}
\bo y_J + \transpose{(\bo A_{I \times J})} \bo y_I & = \bo u_J \\
\transpose{(\bo A_{I \times \compl{J}})} \bo y_I & = \bo u_{\compl{J}} 
\end{align*}
By the latter part, for all $i \in I$, 
\[
\bo y_i = (-1)^{n-\idx(i)} \det\begin{pmatrix} \bo A_{(I \setminus \{i\}) \times \compl{J}} \\ \transpose{(\bo u_{\compl{J}})}\end{pmatrix} / \det(\bo A_{I \times \compl{J}}) \ ,
\]
where $\idx(i)$ represents the index of $i$ in the ordered set $I$. It follows that for all $j \in J$, we have:
\[
\bo y_j \det(\bo A_{I \times \compl{J}}) = \bo u_j \det(\bo A_{I \times \compl{J}}) - \sum_{i \in I} (-1)^{n-\idx(i)} \bo A_{ij} \det \begin{pmatrix}
\bo A_{(I \setminus \{i\}) \times \compl{J}} \\
\transpose{(\bo u_{\compl{J}})}
\end{pmatrix} \enspace.
\]
By developing the determinant of $\begin{psmallmatrix} \bo A_{I \times (\compl{J} \cup \{j\})} \\ \transpose{(\bo u_{\compl{J} \cup \{j\}})} \end{psmallmatrix}$ w.r.t.~ the column $\begin{psmallmatrix} \bo A_{I \times \{j\}} \\ \bo u_j \end{psmallmatrix}$, we obtain that:
\[
\bo y_j \det(\bo A_{I \times \compl{J}}) = 
(-1)^{n + 1 - \idx'(j)} \det \begin{pmatrix} \bo A_{I \times (\compl{J} \cup \{j\})} \\ \transpose{(\bo u_{\compl{J} \cup \{j\}})} \end{pmatrix} \ , 
\]
where $\idx'(j)$ stands for the index of $j$ in the ordered set $\compl{J} \cup \{j\}$. 
\end{proof}

As a consequence of Lemma~\ref{lemma:reduced_cost}, the properties $\bo y_l > 0$, $\bo z_l > 0$ can be tested by determining the signs of $Q_l(\bo M)$, $R_l(\bo M)$ and $P_{I \times \compl{J}}(\bo M)$. Moreover, we have:
\[
\bo y_l/\bo z_l = Q_l(\bo M)/R_l(\bo M) \ . 
\] 
Hence, the comparison of two ratios $\bo y_k / \bo z_k$ and $\bo y_l / \bo z_l$ involved in the shadow-vertex rule can be made by evaluating the sign of a polynomial of the form $T_{kl} := Q_k R_l - Q_l R_k$ on the matrix $\bo M$. This shows that the shadow-vertex rule is semi-algebraic.

\subsubsection*{Tropical shadow-vertex rule.}

Following the previous discussion, we can express $\sigmags(\bo A, \bo u, \bo v)$ as a function defined in terms of the signs of some minors $\det(\bo M_{K \times L})$, and of the signs of the $T_{kl}(\bo M)$ ($k \neq l$). Given tropical entries $(A,u,v)$, we simply define $\sigmagstrop(A, u, v)$ as the same function, in which the minors of $\bo M$ have been substituted by the corresponding tropical minors of the matrix $M := \begin{psmallmatrix} A \\ \transpose{u} \\ \transpose{v} \end{psmallmatrix}$, and the $T_{kl}(\bo M)$ have been replaced by $\tropPol(T_{kl})(M)$. 

In more detail, the function $\sigmagstrop(A,u,v)(I,J)$ returns the unique element $\vout \in \Lambda$ such that 
\begin{equation}
\sign(\tropPol(T_{\vout l}(M)) = -s_{\vout} s_l \qquad \text{for all} \ l \in \Lambda \setminus \{\vout\} \ , \label{eq:comparison}
\end{equation}
where $\Lambda$ is the set of the elements $l \in I \disjcup J$ such that $\sign(\tropPol(Q_l)(M)) = \sign(\tropPol(R_l)(M)) = s_l \sign(\tdet(M_{I \times \compl{J}}))$. The latter condition is the tropical counterpart of the conditions $\bo y_l, \bo z_l > 0$ in the definition of $\sigmags$. Equation~\eqref{eq:comparison} is the analogue of $\bo y_{\vout}/ \bo z_{\vout} < \bo y_l/ \bo z_l$ for all $l \in \Lambda$, $l \neq \vout$. If the set $\Lambda$ is empty, we set $\sigmagstrop(A,u,v)(I,J)$ to the special value $\None$.

The main result of this section is the following:
\begin{theorem}\label{th:gs_trop}
If the matrix $\begin{psmallmatrix} A \\ \transpose{u} \\ \transpose{v} \end{psmallmatrix}$ is strongly non-degenerate, then $\sigmagstrop$ is compatible with $\sigmags$ on the instance $(A, u, v)$. 

Moreover, for all bases $(I,J)$, the leaving variable returned by $\sigmagstrop(A,u,v)(I,J)$ can be computed in time $O(n^4)$.
\end{theorem}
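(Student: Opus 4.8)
The overall strategy is to invoke the semi-algebraic description of $\sigmags$ obtained above together with Lemma~\ref{lemma:poly_sign}: the only thing that has to be checked is that none of the tropical polynomials entering the definition of $\sigmagstrop$ vanishes on $M := \begin{psmallmatrix}A\\\transpose{u}\\\transpose{v}\end{psmallmatrix}$. Fix a basis $(I,J)$ and an arbitrary lift $\bo M\in\sval^{-1}(M)$, and write $\bo y,\bo z$ for the reduced cost vectors at $(I,J)$. Since $M$ is strongly non-degenerate, every square submatrix of $M$ is generic, so for each minor $P_{K\times L}$ occurring below --- namely $P_{I\times\compl{J}}$, the $Q_l$ and $R_l$, and the auxiliary minors introduced in the next paragraph --- the tropical polynomial $\tropPol(P_{K\times L})$ does not vanish on $M$, and by Lemma~\ref{lemma:poly_sign} one has $\sval(P_{K\times L}(\bo M)) = \tropPol(P_{K\times L})(M) = \pm\tdet(M_{K\times L})\neq\zero$, independently of the lift.

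The heart of the matter is to deal with the polynomials $T_{kl} = Q_kR_l - Q_lR_k$, which are \emph{not} minors. Here I would use Plücker relations (in the guise of Desnanot--Jacobi / Sylvester-type determinantal identities) to show that, as polynomials in the formal matrix $X$,
\[
T_{kl} = \pm\, P_{I\times\compl{J}}\cdot P_{E(k,l)\times F(k,l)} \ ,
\]
where $P_{E(k,l)\times F(k,l)}$ is again a single minor of $X$; for instance $E(k,l) = (I\setminus\{k,l\})\cup\{m+1,m+2\}$ and $F(k,l)=\compl{J}$ when $k,l\in I$, and the remaining two cases ($k,l\in J$, and $k\in I$, $l\in J$) are analogous. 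Granting this, the support of $T_{kl}$ is contained in the Minkowski sum of the supports of $P_{I\times\compl{J}}$ and $P_{E(k,l)\times F(k,l)}$, and the modulus at $M$ of a monomial is additive under that sum; since each of the two minors corresponds to a generic submatrix of $M$, each of the two supports has a \emph{unique} modulus-maximising element, hence so does the Minkowski sum, and the maximiser $\beta^*+\gamma^*$ lies in $\supp(T_{kl})$ with coefficient $\pm1$ (no cancellation occurs because the pair realising it is unique). Consequently $\tropPol(T_{kl})$ does not vanish on $M$, one has $\tropPol(T_{kl})(M) = \pm\tdet(M_{I\times\compl{J}})\tdot\tdet(M_{E(k,l)\times F(k,l)})$, and this equals $\sval(T_{kl}(\bo M))$ for every lift by Lemma~\ref{lemma:poly_sign}. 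I expect verifying the factorization, together with this ``unique dominant monomial'' argument, to be the main work of the proof.

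The compatibility statement is then bookkeeping. By Lemma~\ref{lemma:reduced_cost}, $\sign(\bo y_l) = s_l\sign(\tropPol(Q_l)(M))\sign(\tdet(M_{I\times\compl{J}}))$ and likewise for $\bo z_l$, so the condition $\bo y_l>0$ and $\bo z_l>0$ is equivalent to $l\in\Lambda$; hence $\sigmags(\bo A,\bo u,\bo v)(I,J)=\None$ iff $\Lambda=\emptyset$ iff $\sigmagstrop(A,u,v)(I,J)=\None$, and otherwise both rules minimise over the same index set $\Lambda$. For $k\neq l$ in $\Lambda$ one has $\bo y_k\bo z_l-\bo y_l\bo z_k = s_ks_l\,T_{kl}(\bo M)/P_{I\times\compl{J}}(\bo M)^2$, so, the denominator being a positive square, $\bo y_k/\bo z_k<\bo y_l/\bo z_l$ is equivalent to $s_ks_l\,T_{kl}(\bo M)<0$, i.e.\ to $\sign(\tropPol(T_{kl})(M))=-s_ks_l$, which is exactly~\eqref{eq:comparison}. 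Since $\tropPol(T_{kl})(M)\neq\zero$, this relation is a strict total order on $\Lambda$ (equivalently, the ratios $\bo y_l/\bo z_l$, $l\in\Lambda$, are pairwise distinct), so it has a unique minimum, which is simultaneously the leaving variable selected by $\sigmags(\bo A,\bo u,\bo v)(I,J)$ --- and this does not depend on the lift --- and the element returned by $\sigmagstrop(A,u,v)(I,J)$. This proves the first assertion.

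For the complexity bound, note that $\tdet(M_{I\times\compl{J}})$ and each of the $\tropPol(Q_l)(M)$, $\tropPol(R_l)(M)$ for $l\in I\disjcup J$ is, up to sign, the tropical determinant of a submatrix of $M$ of order at most $n$, computable in time $O(n^3)$ by solving an optimal assignment problem; this produces $\tdet(M_{I\times\compl{J}})$ and the set $\Lambda$ (from the sign conditions) in total time $O(n^4)$. If $\Lambda=\emptyset$ we return $\None$. Otherwise we locate the minimum of the above order on $\Lambda$ by a tournament of $|\Lambda|-1=O(n)$ pairwise comparisons; the comparison between $k$ and $l$ requires only $\sign\bigl(\tdet(M_{I\times\compl{J}})\tdot\tdet(M_{E(k,l)\times F(k,l)})\bigr)$, i.e.\ one further assignment problem of order $O(n)$, since $\tdet(M_{I\times\compl{J}})$ is already available. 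The comparisons therefore cost $O(n\cdot n^3)=O(n^4)$ in total, and the leaving variable is returned in time $O(n^4)$. (Evaluating $\tropPol(T_{kl})(M)$ for all $O(n^2)$ pairs would instead cost $O(n^5)$; the point is that only $O(n)$ comparisons are needed to locate the minimum.)
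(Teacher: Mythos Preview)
Your proposal is correct and follows essentially the same approach as the paper's proof: both rely on the Pl\"ucker factorization $T_{kl} = P_{I\times\compl{J}}\cdot P_{(I\setminus\{k,l\}\cup\{m+1,m+2\})\times\compl{J}}$ (in the case $k,l\in I$, with analogous formulas in the other cases), then decompose the resulting optimization into two optimal assignment problems whose uniqueness follows from strong non-degeneracy, and finish with a tournament of $O(n)$ comparisons each costing $O(n^3)$. The paper phrases the decomposition via the Newton polytope of $T_{kl}$ being a Minkowski sum of two bipartite matching polytopes, whereas you argue directly on supports and explicitly note why the dominant monomial survives without cancellation; these are the same argument in slightly different language.
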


\begin{proof}
Let $\bo A \in \sval^{-1}(A)$, $\bo u \in \sval^{-1}(u)$, and $\bo v \in \sval^{-1}(v)$. By assumption, the matrix $M = \begin{psmallmatrix} A \\ \transpose{u} \\ \transpose{v} \end{psmallmatrix}$ is strongly non-degenerate, so that the sign of every tropical minor $\tdet(M_{K \times L})$ coincides with the sign of the corresponding minor of $\bo M := \begin{psmallmatrix}\bo A \\ \transpose{\bo u} \\ \transpose{\bo v}\end{psmallmatrix}$ by Lemma~\ref{lemma:poly_sign}. Consequently, by Lemma~\ref{lemma:reduced_cost}, the set $\Lambda$ precisely consists of the elements $l \in I \disjcup J$ such that $\bo y_l > 0$ and $\bo z_l > 0$. As discussed earlier, each tropical minor can be computed in time $O(n^3)$. Hence, the set $\Lambda$ can be determined in time $O(n^4)$. It now remains to examine the case of the polynomials $\tropPol(T_{kl})$, and to show in particular that they do not vanish on $M$. For the sake of brevity, we restrict to the case $k,l \in I$. The general case can be handled in a similar way.

Let us write the polynomial $\tropPol(T_{kl})$ under the form $\tsum_{\alpha \in S} c_\alpha \tdot X^{\tdot \alpha}$, where $S$ is the support of $T_{kl}$ (we use the notation $X^{\tdot \alpha}$ as a shorthand of $\tprod_{ij} X_{ij}^{\tdot \alpha_{ij}}$). By definition, $\tropPol(T_{kl})$ does not vanish on $M$ if, and only if, there exists a unique solution to the following maximization problem:
\begin{equation}
\begin{array}{r@{\quad}l}
\text{maximize} & |c_\alpha | \tdot |M|^{\tdot \alpha} \\
\text{subject to} & \alpha \in S 
\end{array}
\label{eq:vanish}
\end{equation}
Observe that the coefficients in $T_{kl}$ are integers. Hence, as elements of the field $\puiseux$, they are constant Puiseux series, with valuation $0 = \unit$. Besides, $M_{ij} \neq \zero$ for all $(i,j)$, thanks to the strong non-degeneracy of $M$. Then, we can simply rewrite $|c_\alpha| \tdot |M|^{\tdot \alpha} = \sum_{ij} |M_{ij}| \alpha_{ij}$ for all $\alpha \in S$. As a consequence, Problem~\eqref{eq:vanish} can be solved by considering the following classical linear program:
\begin{equation}
\begin{array}{r@{\quad}l}
\text{maximize} & \sum_{ij} |M_{ij}| \alpha_{ij} \\
\text{subject to} & \alpha \in \new(T_{kl}) 
\end{array}
\label{eq:lp_over_newton}
\end{equation}
where $\new(T_{kl}) \subset \R^{(m+2) \times n}$ is the \emph{Newton polytope} of the polynomial $T_{kl}$, defined as the convex hull of its support $S$. Since the set of vertices of the polytope $\new(T_{kl})$ is a subset of $S$, it is easy to show that $\tropPol(T_{kl})$ does not vanish on $M$ if, and only if, Problem~\eqref{eq:lp_over_newton} admits a unique solution $\alpha^*$. In this case, we have $\alpha^* \in S$, and the sign of $\tropPol(T_{kl})(M)$ is immediately given by the sign of the term $c_{\alpha^*} \tdot M^{\tdot \alpha^*}$. 

It now remains to check that Problem~\eqref{eq:lp_over_newton} indeed has a unique solution, and that the latter can be found efficiently. To this aim, we use Pl{\"u}cker quadratic relations (see for instance~\cite[Chapter 3]{GelfandKapranovZelevinsky}), which provide the identity 
$T_{kl} = P_{I \times \compl{J}} \, P_{(I \setminus \{k,l\} \cup \{m+1, m+2\}) \times \compl{J}}$. This implies that the Newton polytope of $T_{kl}$ consists in the Minkowski sum of the two polytopes $\Delta_1 := \new(P_{I \times \compl{J}})$ and $\Delta_2 := \new(P_{(I \setminus \{k,l\} \cup \{m+1, m+2\}) \times \compl{J}})$. As a result, Problem~\eqref{eq:lp_over_newton} can be decomposed into the following two linear programs:
\begin{equation}
\begin{array}{r@{\quad}l}
\text{maximize} & \sum_{ij} |M_{ij}| \alpha_{ij} \\
\text{subject to} & \alpha \in \new(\Delta_i) 
\end{array}
\qquad \quad \text{for} \ i \in \{1, 2\} \ .\label{eq:sub_problems}
\end{equation}
More precisely, the set of optimal solutions of Problem~\eqref{eq:lp_over_newton} is precisely the Minkowski sum of the set of optimal solutions of the two problems given in~\eqref{eq:sub_problems}. The polytopes $\new(\Delta_i)$ are bipartite perfect matching polytopes. Consequently, the two problems in~\eqref{eq:sub_problems} correspond to optimal assignment problems, and can be solved in $O(n^3)$. Besides, they both admit a unique solution thanks to the genericity condition on $M$. 

To summarize, $\tropPol(T_{kl})$ does not vanish on $M$, and the sign of $\tropPol(T_{kl})(M)$ can be computed in time $O(n^3)$. By Lemmas~\ref{lemma:poly_sign} and~\ref{lemma:reduced_cost}, $\sign(\tropPol(T_{k l}(M)) = -s_k s_l$ if, and only if, $\bo y_k/\bo z_k < \bo y_l/ \bo z_l$. This completes the analysis of the polynomial $\tropPol(T_{kl})$. 

We deduce that $\sigmagstrop$ and $\sigmags$ are compatible. The output $\vout$ of $\sigmagstrop(A,u,v)(I,J)$ can be computed by determining the smallest element of the set $\Lambda$ according to the abstract ordering relation $\prec$ defined by $k \prec l \Longleftrightarrow \sign(\tropPol(T_{kl})(M)) = -s_k s_l$. Every comparison has time complexity $O(n^3)$, and so the result can be obtained in time $O(n^4)$.  
\end{proof}

\section{Average-case complexity of mean payoff games}\label{sec:pcbc}

\subsection{Tropicalization of the Parametric Constraint-by-Constraint algorithm}\label{subsec:classical_pcbc}

\begin{figure}[t]
\begin{small}
\begin{algorithmic}[1]
\Procedure {PCBC}{$\bo A, \bo b$}
\State $\bo u \gets \transpose{(\bo \eps, \bo \eps^2, \dots, \bo \eps^n)}$ \Comment{$0 < \bo \eps \ll 1$}
\State $\bar{\bo x} \gets \transpose{(0, \dots, 0)}$ \label{line:init}
\For{$k = 1$ to $m$} \label{line:bloop}
\If{$\bo A_k \bar{\bo x} + \bo b_k < 0$}
    \State Starting from $\bar{\bo x}$, iterate over the basic points and edges of $\Pbcal^{(k-1)}$ using the rule $\sigmags(\bo A_{[k-1]}, \bo u, \transpose{(\bo A_k)})$, until finding a point $\bar{\bo x}'$ such that $\bo A_k \bar{\bo x}' + \bo b_k = 0$. \label{line:simplex}
    \sIf{there is no such point $\bar{\bo x}'$} \Return ``Empty'' \label{line:stop}
    \sElse $\bar{\bo x} \gets \bar{\bo x}'$
\EndIf
\EndFor \label{line:eloop}
\State \Return ``Non-empty''
\EndProcedure
\end{algorithmic}
\end{small}
\caption{Parametric Constraint-by-Constraint algorithm}\label{fig:pcbc}
\end{figure}

The ave\-rage-case analysis of~\cite{AdlerKarpShamir87} applies to the so-called \emph{Parametric Constraint-by-Constraint} ($\PCBC$) algorithm. We first recall the principle of this algorithm. We restrict the presentation to polyhedral feasibility problems, following our motivation to their tropical counterparts and mean payoff games. 

The $\PCBC$ algorithm is given in Figure~\ref{fig:pcbc}. Given $\bo A \in \R^{m \times n}$, $\bo b \in \R^m$, and $k \in \{0, \dots, m\}$, we denote by $\Pbcal^{(k)}$ the polyhedron defined by the first $n+k$ inequalities of the system $\bo x \geq 0, \ \bo A \bo x + \bo b \geq 0$. The $\PCBC$ algorithm consists in determining by induction on $k = 1, \dots, m$ whether the polyhedron $\Pbcal^{(k)}$ is empty. The invariant of the loop from Lines~\lineref{line:bloop} to~\lineref{line:eloop} is that $\bar{\bo x}$ is the (unique) basic point of $\Pbcal^{(k-1)}$ minimizing the function $\bo x \mapsto \transpose{\bo u} \bo x$, where $\bo u$ is an objective vector fixed throughout the whole execution of $\PCBC$. At the $k$-th iteration of the loop, when $\bar{\bo x}$ does not satisfy the constraint $\bo A_k \bo x + \bo b_k \geq 0$, the simplex algorithm equipped with the shadow-vertex pivoting rule is used. The co-objective vector is set to $\transpose{(\bo A_k)}$. The simplex algorithm thus follows a path in $\Pbcal^{(k-1)}$ consisting of basic points and the edges between them. We stop it as soon as it discovers a point $\bar{\bo x}' \in \Pbcal^{(k-1)}$ such that $\bo A_k \bar{\bo x}' + \bo b_k = 0$ on the path. This point is obviously a basic point of $\Pbcal^{(k)}$. It follows from the definition of the shadow-vertex rule that $\bar{\bo x}'$ minimizes the objective function $\bo x \mapsto \transpose{\bo u} \bo x$ over $\Pbcal^{(k)}$. Then, $\bar{\bo x}'$ can be used as a starting point for the execution of the simplex algorithm during the $(k+1)$-th iteration. If no such point $\bar{\bo x}'$ is discovered, the simplex algorithm stops at a basic point $\bo x''$ associated with a basis $(I'',J'')$ such that $\sigmags(\bo A_{[k-1]}, \bo u, \transpose{(\bo A_k)})(I'', J'') = \None$.\footnote{\label{footnote:ray}As noted in~\cite[Section~4]{AdlerKarpShamir86}, if the simplex algorithm encounters a ray of the polyhedron of $\Pbcal^{(k-1)}$ on the path, it necessarily finds a point $\bar{\bo x}' \in \Pbcal^{(k-1)}$ such that $\bo A_k \bar{\bo x}' + \bo b_k = 0$.} In this case, $\bo x''$ maximizes the function $\bo x \mapsto \bo A_k \bo x$ over $\Pbcal^{(k-1)}$, which shows that $\bo A_k \bo x + \bo b_k < 0$ for all $\bo x \in \Pbcal^{(k-1)}$. Then, the algorithm returns ``Empty''.

The objective vector $\bo u$ is set to $\transpose{(\bo \eps, \bo \eps^2, \dots, \bo \eps^n)}$, where $\bo \eps > 0$ is a sufficiently small scalar. Since $\bo u_j > 0$ for all $j \in [n]$, the vector $\transpose{(0, \dots, 0)}$ is a basic point of $\Pbcal^{(0)} = (\R_+)^n$ minimizing $\bo x \mapsto \transpose{\bo u} \bo x$.

The $\PCBC$ algorithm is still correct when applied on inputs $\bm A, \bm b$ with entries in~$\puiseux$. This suggests to tropicalize it by using the tropical simplex algorithm equipped with the pivoting rule developed in Section~\ref{sec:shadow_vertex}. This is the purpose of the algorithm $\tropPCBC$ given in Figure~\ref{fig:trop_pcbc}. Its principle is analogous to $\PCBC$. It manipulates the sequence of tropical polyhedra $\Pcal^{(k)}$ ($0 \leq k \leq m$), which are respectively defined by the first $n+k$ inequalities of the system $x \geq \zero$, $A^+ \tdot x \tplus b^+ \geq A^- \tdot x \tplus b^-$. It also involves an objective vector of the form $u \gets \transpose{(\eps, \eps^{\tdot 2}, \dots, \eps^{\tdot n})}$, with $\eps < 0$. 

Let us describe in more detail the operations performed at Line~\lineref{line:trop_simplex}. For each visited basic point $x$ of $\Pcal^{(k-1)}$, the tropical rule $\sigmagstrop(A_{[k-1]}, u, \transpose{(A_k)})$ determines a variable $\vout$ leaving the basis $(I,J)$ associated with $x$. The tropical simplex algorithm then pivots along the edge $\Ecal$ formed by the points $z \in \Pcal^{(k-1)}$ which activate all the inequalities indexed by $l \in (I \disjcup J) \setminus \{\vout\}$ in the system defining $\Pcal^{(k-1)}$ (see Remark~\ref{remark:edge}). There exists a point $\bar{x}' \in \Ecal$ such that $A^+_k \tdot \bar{x}' \tplus b^+_k = A^-_k \tdot \bar{x}' \tplus b^-_k$ if, and only if, the pair $(I',J')$ given by $I' \disjcup J' = (I \disjcup J) \setminus \{\vout\} \cup \{k\}$ is a feasible basis of $\Pcal^{(k)}$. Indeed, such an $\bar{x}'$ is precisely characterized as the basic point of $\Pcal^{(k)}$ of basis $(I',J')$. Thus, its existence can be checked in time $O(n (m+n^2))$, as explained in Section~\ref{subsec:tropical_simplex}. If there is no such $\bar{x}'$ in $\Ecal$, we use the algorithm $\Call{Pivot}{(I,J),\vout}$ to compute the next basis of $\Pcal^{(k-1)}$. 

Note that the condition at Line~\lineref{line:stop} is satisfied when there is no basic point of $\Pcal^{(k-1)}$ to be visited anymore, \ie~when the tropical pivoting rule $\sigmagstrop(A_{[k-1]}, u, \transpose{(A_k)})$ returns $\None$. 

\begin{figure}[t]
\begin{small}
\begin{algorithmic}[1]
\Procedure{TropPCBC}{$A, b$}
\State $u \gets \transpose{(\eps, \eps^{\tdot 2}, \dots, \eps^{\tdot n})}$  \Comment{$-\infty < \eps \ll 0$}
\State $\bar{x} \gets \transpose{(\zero, \dots, \zero)}$
\For{$k = 1$ to $m$} 
\If{$A^+_k \tdot \bar{x} \tplus b^+_k < A^-_k \tdot \bar{x} \tplus b^-_k$}
    \State Starting from $\bar{x}$, iterate over the tropical basic points and edges of $\Pcal^{(k-1)}$ using the tropical rule $\sigmagstrop(A_{[k-1]}, u, \transpose{(A_k)})$ until finding a point $\bar{x}' \in \Pcal^{(k-1)}$ such that $A^+_k \tdot \bar{x}' \tplus b^+_k = A^-_k \tdot \bar{x}' \tplus b^-_k$. \label{line:trop_simplex}
    \sIf{there is no such point $\bar{x}'$} \Return ``Empty''
    \sElse $\bar{x} \gets \bar{x}'$
\EndIf
\EndFor 
\State \Return ``Non-empty''
\EndProcedure
\end{algorithmic}
\end{small}
\vskip-2ex
\caption{Tropicalization of the PCBC algorithm}\label{fig:trop_pcbc}
\end{figure}

In order to use the tropical shadow-vertex rule on the instances $(A_{[k-1]}, u, \transpose{(A_k)})$, we verify that the matrix $\begin{psmallmatrix} A \\ \transpose{u} \end{psmallmatrix}$ is strongly non-degenerate. The following lemma shows that this property holds assuming that $\eps$ is small enough:
\begin{lemma}\label{lemma:eps}
Suppose that the matrix $A$ is strongly non-degenerate, and $\eps < n (\min_{ij} |A_{ij}| - \max_{ij} |A_{ij}|)$. Then, the matrix $\begin{psmallmatrix} A \\ \transpose{u} \end{psmallmatrix}$ is strongly non-degenerate.
\end{lemma}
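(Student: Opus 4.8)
Since strong non-degeneracy is a condition on the \emph{moduli} of the entries only, the signs of $\eps$ and of the $u_j$ are irrelevant, and it suffices to check that every square submatrix of $M := \begin{psmallmatrix} A \\ \transpose{u}\end{psmallmatrix}$ is generic. Write such a submatrix as $M_{K \times L}$ with $K \subset [m+1]$, $L \subset [n]$, $|K| = |L| =: r$. If $m+1 \notin K$ then $M_{K \times L} = A_{K \times L}$ is generic by hypothesis, so I would assume $m+1 \in K$ and set $K' := K \setminus \{m+1\}$. The last row of $M$ has entries of modulus $|u_j| = |\eps^{\tdot j}| = j \tdot |\eps|$, which are finite since $|\eps| \neq \zero$; the other entries of $M_{K \times L}$ are entries of $A$, hence also finite because strong non-degeneracy of $A$ forces them to be $\neq \zero$. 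Thus the maximum~\eqref{eq:tper} associated with $M_{K \times L}$ is $\neq \zero$, and the only thing to prove is that it is attained by a \emph{unique} bijection $K \to L$.

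The plan is to phrase this as an optimal assignment problem split according to the column matched with row $m+1$. For a bijection $\sigma \colon K \to L$ put $W(\sigma) := \sum_{i \in K} |M_{i\sigma(i)}| = \sigma(m+1)\,|\eps| + \sum_{i \in K'} |A_{i\sigma(i)}|$, where the first summand is a real number and the second is a sum of $r-1 \leq n-1$ moduli of entries of $A$, hence lies in $[(r-1)\alpha,(r-1)\beta]$ with $\alpha := \min_{ij}|A_{ij}|$ and $\beta := \max_{ij}|A_{ij}|$ (real numbers, $\alpha \leq \beta$). I would argue by contradiction: suppose $\sigma \neq \tau$ both maximize $W$ (so in particular $r \geq 2$), and distinguish the cases $\sigma(m+1) = \tau(m+1)$ and $\sigma(m+1) \neq \tau(m+1)$.

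In the first case, writing $j := \sigma(m+1) = \tau(m+1) \in L$, the term $j|\eps|$ is common to $W(\sigma)$ and $W(\tau)$, so $\sigma|_{K'}$ and $\tau|_{K'}$ are two distinct maximizing bijections for the assignment problem on $A_{K' \times (L \setminus \{j\})}$, which is a square submatrix of $A$ since $|K'| = r - 1 = |L \setminus \{j\}|$; this contradicts genericity of that submatrix. In the second case $|\sigma(m+1) - \tau(m+1)| \geq 1$, so the row-$(m+1)$ contributions to $W(\sigma)$ and $W(\tau)$ differ in absolute value by at least $|\eps|$, that is by at least $-\eps$ (recall $\eps < 0$), whereas the $A$-parts differ by at most $(r-1)(\beta - \alpha) \leq (n-1)(\beta - \alpha)$. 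The hypothesis $\eps < n(\alpha - \beta)$, equivalently $-\eps > n(\beta - \alpha) \geq (n-1)(\beta - \alpha)$, then makes the row-$(m+1)$ term dominate, forcing $W(\sigma) \neq W(\tau)$ — a contradiction. Hence $M_{K \times L}$ is generic for every $K, L$, so $M$ is strongly non-degenerate. The one delicate point is the reduction in the first case: one must observe that a maximizer of $W$ restricts to a maximizer of a genuinely \emph{square} sub-assignment-problem on $A$ (the row-$(m+1)$ term becoming a constant once $\sigma(m+1)$ is fixed); the rest is a routine magnitude estimate, and the constant $n$ in the bound on $\eps$ is precisely what absorbs the worst case $r = n$ while leaving strict room against $|\sigma(m+1) - \tau(m+1)| \geq 1$.
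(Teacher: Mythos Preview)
Your proof is correct and follows essentially the same approach as the paper's: both arguments split according to whether the square submatrix involves the row $\transpose{u}$, then argue by contradiction with two optimal bijections $\sigma,\tau$, handling the cases $\sigma(m+1)=\tau(m+1)$ (reduce to genericity of a submatrix of $A$) and $\sigma(m+1)\neq\tau(m+1)$ (magnitude estimate using the bound on~$\eps$) exactly as you do. Your case~2 estimate, bounding the $A$-part discrepancy by $(r-1)(\beta-\alpha)\leq (n-1)(\beta-\alpha)$, is in fact slightly cleaner than the paper's version; one tiny notational slip is ``$j\tdot|\eps|$'', which should read $j\cdot|\eps|$ (ordinary multiplication) or $|\eps|^{\tdot j}$.
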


\begin{proof}
Let $M \in \strop^{K \times L}$ be a square submatrix of $\begin{psmallmatrix} A \\ \transpose{u} \end{psmallmatrix}$. If $M$ is a submatrix of $A$, then it is clear that $M$ is generic. Now, suppose that $M$ involves the last line $\transpose{u}$ (\ie~$m+1 \in K$), and that the maximum
\[
\max \Big\{ \bigodot_{k \in K} |M_{k \sigma(k)}| : \sigma \  \text{is a bijection from} \ K \ \text{to}\  L  \Big\}
\] 
is reached at least by two distinct bijections $\sigma^*$ and $\tau^*$. If $\sigma^*(m+1) = \tau^*(m+1)$, this immediately shows that the $(K \setminus \{m+1\}) \times (L \setminus \{\sigma^*(m+1)\})$-submatrix of $A$ is degenerate. Thus, we can suppose that $\sigma^*(m+1)$ and $\tau^*(m+1)$ are distinct, for instance $\sigma^*(m+1) > \tau^*(m+1)$. However, as $\eps < 0$, we have:
\[
\bigodot_{k \in K} |M_{k \sigma^*(k)}| \leq \sigma^*(m+1) \eps + n \max_{ij} |A_{ij}| < \tau^*(m+1) \eps + n \min_{ij} |A_{ij}| \leq \bigodot_{k \in K} |M_{k \sigma^*(k)}| \ .
\]
In any case, we get a contradiction. We deduce that $M$ is generic.
\end{proof}
Note that, under the assumptions of Lemma~\ref{lemma:eps}, if we choose $\bo A \in \sval^{-1}(A)$ and $\bo u \in \sval^{-1}(u)$, no minor of the matrix $\begin{psmallmatrix} \bo A \\ \transpose{\bo u} \end{psmallmatrix}$ is null. This ensures that the application of the classical shadow-vertex pivoting rule also makes sense in the $\PCBC$ algorithm.

Thanks to the compatibility of the tropical shadow-vertex rule with its classical counterpart (Theorem~\ref{th:gs_trop}), we immediately obtain the following result:
\begin{proposition}\label{prop:trop_pcbc}
Let $A \in \strop^{m \times n}$, $b \in \strop^{m}$ such that $(A \ b)$ is strongly non-degenerate, and let $\eps < n (\min_{ij} |A_{ij}| - \max_{ij} |A_{ij}|)$. 

Then, the algorithm $\tropPCBC$ correctly determines whether the tropical polyhedron $\{x \in \trop^n : A^+ \tdot x \tplus b^+ \geq A^- \tdot x \tplus b^-\}$ is empty. 

Moreover, for all $\bo A \in \sval^{-1}(A)$, $\bo b \in \sval^{-1}(b)$ and $\bo \eps \in \sval^{-1}(\eps)$, the total number of basic points visited by $\tropPCBC(A, b)$ and by $\PCBC(\bo A, \bo b)$ are equal.
\end{proposition}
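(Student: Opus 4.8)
The plan is to fix arbitrary lifts $\bo A \in \sval^{-1}(A)$, $\bo b \in \sval^{-1}(b)$, $\bo\eps \in \sval^{-1}(\eps)$, set $\bo u := \transpose{(\bo\eps, \bo\eps^2, \dots, \bo\eps^n)}$, and prove by induction on the loop index $k$ that the runs of $\tropPCBC(A,b)$ and of $\PCBC(\bo A, \bo b)$ proceed in lockstep: they enter the simplex phase at the same iterations, they visit the same sequence of bases and edges, and the tropical working point $\bar x$ always equals $\val(\bar{\bo x})$. The equality of the numbers of visited basic points is then immediate, and correctness of $\tropPCBC$ follows from correctness of $\PCBC$.

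First I would record the data and the non-degeneracy bookkeeping. Since $\sval$ is multiplicative, $\bo u \in \sval^{-1}(u)$; since $\eps < 0$ the series $\bo\eps$ is a positive infinitesimal, so $\bo u > 0$ componentwise, which makes $\transpose{(0,\dots,0)}$ the basic point of $\Pbcal^{(0)}$ of basis $(\emptyset,[n])$ minimizing $\transpose{\bo u}\bo x$, matching $\transpose{(\zero,\dots,\zero)}$ on the tropical side (base case $k=0$). Strong non-degeneracy of $(A\ b)$ implies strong non-degeneracy of $A$, and with the hypothesis $\eps < n(\min_{ij}|A_{ij}| - \max_{ij}|A_{ij}|)$, Lemma~\ref{lemma:eps} yields strong non-degeneracy of $\begin{psmallmatrix} A \\ \transpose{u} \end{psmallmatrix}$. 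At step $k$ the matrix $\begin{psmallmatrix} A_{[k-1]} \\ \transpose{u} \\ A_k \end{psmallmatrix}$ is, up to permutations of rows and columns, a submatrix of $\begin{psmallmatrix} A \\ \transpose{u} \end{psmallmatrix}$, hence strongly non-degenerate; so Theorem~\ref{th:gs_trop} applies to the instance $(A_{[k-1]}, u, \transpose{(A_k)})$, i.e.\ $\sigmagstrop(A_{[k-1]}, u, \transpose{(A_k)})$ and $\sigmags(\bo A_{[k-1]}, \bo u, \transpose{(\bo A_k)})$ return the same leaving variable on every basis. The same observations, through Lemma~\ref{lemma:poly_sign}, show that no minor of $(\bo A_{[k-1]}\ \bo b_{[k-1]})$ or of $(\transpose{\bo A_{[k-1]}}\ \bo u\ \transpose{(\bo A_k)})$ is null, so the classical shadow-vertex rule and hence $\PCBC(\bo A, \bo b)$ are well defined and non-degenerate.

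For the inductive step, assume that before iteration $k$ the points $\bar x$ and $\bar{\bo x}$ are the basic points of a common basis $(I,J)$ in $\Pcal^{(k-1)}$ and $\Pbcal^{(k-1)}$ respectively, with $\bar x = \val(\bar{\bo x})$. The pair $(I,J)$ is still a valid basis of $\Pcal^{(k)}$ and of $\Pbcal^{(k)}$, with the same associated basic point (defined by the same equalities), and since $\Pcal^{(k)}$ and $\Pbcal^{(k)}$ have the same feasible bases~\cite[Proposition~17]{AllamigeonBenchimolGaubertJoswig2013}, the test $A^+_k\tdot\bar x\tplus b^+_k \geq A^-_k\tdot\bar x\tplus b^-_k$ holds iff $\bo A_k\bar{\bo x}+\bo b_k\geq 0$; so both algorithms make the same branching decision. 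If the constraint is satisfied, neither working point changes and the invariant persists. If it is violated, both enter the simplex phase from $(I,J)$; by compatibility of the two pivoting rules (Theorem~\ref{th:gs_trop}) and non-degeneracy, the tropical simplex traces the $\val$-image of the classical path (Section~\ref{subsec:tropical_simplex} and Remark~\ref{remark:edge}), so the two runs visit the same bases and edges. On the edge leaving basis $(I,J)$ through the leaving variable $\vout$, there is a point activating constraint $k$ iff $(I',J')$, with $I'\disjcup J' = (I\disjcup J)\setminus\{\vout\}\cup\{k\}$, is a feasible basis of $\Pcal^{(k)}$ — equivalently, of $\Pbcal^{(k)}$ — and when this holds the activating point is exactly the basic point of $\Pcal^{(k)}$ (resp.\ $\Pbcal^{(k)}$) of basis $(I',J')$, and these correspond under $\val$. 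Hence both simplex phases stop at the same iteration with the same new basis and $\val$-related points, or both reach a basis at which the rule returns $\None$, in which case both algorithms output ``Empty'' at the same iteration $k$. This closes the induction, so $\tropPCBC(A,b)$ and $\PCBC(\bo A,\bo b)$ visit the same total number of basic points; moreover $\tropPCBC(A,b)$ returns ``Empty'' iff $\PCBC(\bo A,\bo b)$ does, which happens iff $\Pbcal$ is empty, iff $\Pcal$ is empty (a standard fact on tropicalization of polyhedra; alternatively, when $\tropPCBC$ returns ``Non-empty'' it has produced a feasible basic point $\bar x\in\Pcal$), so $\tropPCBC$ is correct.

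The main obstacle is not a single hard step but the disciplined bookkeeping: one must verify that every genericity hypothesis invoked downstream — the premises of Theorem~\ref{th:gs_trop} at each step $k$, and the classical non-degeneracy of $\PCBC$ — reduces to strong non-degeneracy of $(A\ b)$ together with that of $\begin{psmallmatrix} A \\ \transpose{u} \end{psmallmatrix}$ provided by Lemma~\ref{lemma:eps}, and that the feasibility and stopping tests in \emph{both} algorithms are governed by the common notion of feasible basis of $\Pcal^{(k)}$ (equivalently $\Pbcal^{(k)}$), rather than by ad hoc sign comparisons that would need separate tropicalization arguments.
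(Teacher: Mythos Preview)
Your proposal is correct and follows essentially the same approach as the paper: an induction on $k$ showing that $\tropPCBC(A,b)$ and $\PCBC(\bo A,\bo b)$ visit the same sequence of bases, using Lemma~\ref{lemma:eps} and Theorem~\ref{th:gs_trop} for compatibility of the pivoting rules, and the identification of feasible bases of $\Pcal^{(k)}$ and $\Pbcal^{(k)}$ (\cite[Proposition~17]{AllamigeonBenchimolGaubertJoswig2013}) to synchronize the branching and stopping tests. The only cosmetic difference is that you front-load the non-degeneracy bookkeeping (noting that $\begin{psmallmatrix} A_{[k-1]} \\ \transpose{u} \\ A_k \end{psmallmatrix}$ is, up to a row swap, a submatrix of $\begin{psmallmatrix} A \\ \transpose{u} \end{psmallmatrix}$), whereas the paper appeals to it as needed; the paper also makes explicit the boundedness of the edge $\Ebcal$ via Footnote~\ref{footnote:ray} when no activating point is found, a detail you absorb into the general reference to Section~\ref{subsec:tropical_simplex} and Remark~\ref{remark:edge}.
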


\begin{proof}
First note that the conditions of Theorem~\ref{th:gs_trop} are satisfied, thanks to Lemma~\ref{lemma:eps}. We are going to show by induction that the algorithms $\PCBC$ and $\tropPCBC$ iterate over the same sequence of bases. 
Initially, at Line~\lineref{line:init}, both algorithms start from the basis $(\emptyset, [n])$. 

Now, consider the $k$-th iteration of the loop between Lines~\lineref{line:bloop} and~\lineref{line:eloop}. By induction hypothesis, the points $\bar{\bo x}$ and $\bar{x}$ are basic points of $\Pbcal^{(k-1)}$ and $\Pcal^{(k-1)}$ respectively, associated with the same basis $(I_0,J_0)$. The point $\bar{\bo x}$ (resp.\ $\bar{x}$) satisfies the condition $\bo A_k \bar{\bo x} + \bo b_k \geq 0$ (resp.\ $A^+_k \tdot \bar{x} \tplus b^+_k \geq A^-_k \tdot \bar{x} \tplus b^-_k$) if, and only if, $(I_0,J_0)$ is a feasible basis of $\Pbcal^{(k)}$ (resp.\ $\Pcal^{(k)}$). As the polyhedra $\Pbcal^{(k)}$ and $\Pcal^{(k)}$ have the same feasible bases (see~\cite[Proposition~17]{AllamigeonBenchimolGaubertJoswig2013}), we deduce that the two conditions $\bo A_k \bar{\bo x} + \bo b_k \geq 0$ and $A^+_k \tdot \bar{x} \tplus b^+_k \geq A^-_k \tdot \bar{x} \tplus b^-_k$ are equivalent. 

If none of these conditions is satisfied, the two algorithms $\PCBC$ and $\tropPCBC$ execute Line~\lineref{line:trop_simplex} and run the classical and tropical shadow-vertex algorithms. Assume that the two latter algorithms are located at basic points $\bo x$ and $x$ of $\Pbcal^{(k-1)}$ and $\Pcal^{(k-1)}$ respectively, associated with the same basis $(I,J)$. Note that $\bo x$ is the final point of the path followed by the classical simplex algorithm if, and only if, $x$ is the final point of the path followed by the tropical simplex algorithm. Indeed, these two statements amount to $\sigmags(\bo A_{[k-1]}, \bo u, \transpose{\bo A_k})(I,J) = \None$ and $\sigmagstrop(A_{[k-1]}, u, \transpose{A_k})(I,J) = \None$ respectively. The equivalence then follows from Theorem~\ref{th:gs_trop}.

If $\bo x$, or equivalently $x$, is not the final point of the path, the pivoting rules $\sigmags$ and $\sigmagstrop$ returns the same leaving variable $\vout \in I \disjcup J$, still by Theorem~\ref{th:gs_trop}. In this case, the classical (resp. tropical) simplex algorithm pivots along the edge $\Ebcal$ (resp. $\Ecal$) formed by the points of $\Pbcal^{(k-1)}$ (resp. $\Pcal^{(k-1)}$) which activate the inequalities indexed by $l \in (I \disjcup J) \setminus \{\vout\}$. Let $I' \subset [m]$ and $J' \subset [n]$ such that  $I' \disjcup J' = (I \disjcup J) \setminus \{\vout\} \cup \{k\}$. 

The existence of a point $\bar{\bo x}' \in \Ebcal$ (resp.\ $\bar{x}' \in \Ecal$) such that $\bo A_k \bar{\bo x}' + \bo b_k = 0$ (resp.\ $A^+_k \tdot \bar{x}' \tplus b^+_k = A^-_k \tdot \bar{x}' \tplus b^-_k$) is equivalent to the fact that the basis $(I',J')$ is a feasible basis of $\Pbcal^{(k)}$ (resp.\ of $\Pcal^{(k)}$). Using again the fact that the polyhedra $\Pbcal^{(k)}$ and $\Pcal^{(k)}$ have the same feasible bases, we deduce that the classical simplex algorithm finds a point $\bar{\bo x}'$ such that $\bo A_k \bar{\bo x}' + \bo b_k = 0$ when pivoting along the edge $\Ebcal$ if, and only if, the tropical simplex algorithm discovers a point $\bar{x}' \in \Ecal$ which satisfies $A^+_k \tdot \bar{x}' \tplus b^+_k = A^-_k \tdot \bar{x}' \tplus b^-_k$. In this case, $\bar{\bo x}'$ and $\bar{x}'$ are basic points (of $\Pbcal^{(k)}$ and $\Pcal^{(k)}$ respectively) associated with the same basis $(I',J')$. If no such points $\bar{\bo x}'$ and $\bar{x}'$ exist, the edge $\Ebcal$ is necessarily bounded (see Footnote~\ref{footnote:ray}). Thus, the tropical edge $\Ecal = \val(\Ebcal)$ is also bounded. As a result, the two simplex algorithms both reach new basic points of $\Pbcal^{(k-1)}$ and $\Pcal^{(k-1)}$ respectively. These points are associated with the same basis $(I'',J'')$, given by $I'' \disjcup J'' = (I \disjcup J) \setminus \{\vout\} \cup \{\vin\}$ for some $\vin \not \in I \disjcup J$ (the entering variables in the classical and tropical cases are necessarily identical, by unicity).

This completes the proof by induction, and shows the second part of the proposition. The correctness of the algorithm $\tropPCBC$ immediately follows from the correctness of $\PCBC$, and the fact that $\Pbcal = \emptyset$ if, and only if, $\Pcal = \emptyset$.
\end{proof}

\begin{remark}\label{remark:eps}
As stated in Proposition~\ref{prop:trop_pcbc}, the scalar $\eps$ is supposed to be small enough. We point out that there is no need to choose or manipulate $\eps$ explicitly in the algorithm $\tropPCBC$. Indeed, as shown in the proof of Theorem~\ref{th:gs_trop}, $\eps$ is only involved in optimal assignment problems which arise in the computation of the leaving variable returned by the tropical shadow-vertex rule $\sigmagstrop(A_{[k-1]}, u, \transpose{(A_k)})$ at Line~\lineref{line:trop_simplex} ($k \in [m]$). Let us fix $k$, and let $M := \begin{psmallmatrix} A_{[k-1]} \\ \transpose{u} \\ A_k \end{psmallmatrix} \in \strop^{(k+1) \times n}$. The optimal assignment problems to be solved are associated to weighted bipartite graphs, with node sets $K \subset [k+1]$ and $L \subset [n]$ and weights $|M_{hl}|$ for $h \in K$ and $l \in L$. Let $G$ be such a graph. It involves weights with a dependency on $\eps$ if, and only if, the set $K$ contains the node $k$, which corresponds to the index of the row vector $\transpose{u}$ in the matrix $M$. More precisely, the arcs whose weight depends on $\eps$ are precisely of the form $(k,l)$, for all $l \in L$. Their respective weights are $\eps^{\tdot l} = l \times \eps$. It is clear that for any sufficiently small $\eps < 0$, the (unique) optimal assignment $\sigma^*$ in $G$ is obtained by mapping the node $k$ to the smallest element $l^*$ of $L$, and then by solving the optimal assignment problem in the graph $G'$ obtained from $G$ by removing the nodes $k$ and $l^*$ and their incident arcs. Since the weights of $G'$ do not depend on $\eps$, the optimal assignment problem in $G'$ can be solved numerically, in a standard way. 

In other words, the dependency on $\eps$ in the optimal assignment problems can be handled in a symbolic way. This is comparable to the ``lexicographic'' treatment of the scalar $\bm \eps$ in the $\PCBC$ algorithm described in~\cite[Section~6.1]{AdlerKarpShamir87}.
\end{remark}

\subsection{Average-case analysis} 

Given $\bo A \in \R^{m \times n}$, $\bo b \in \R^m$ such that no minor of the matrix $(\bo A  \  \bo b)$ is null, the probabilistic analysis of~\cite{AdlerKarpShamir87} applies to polyhedra of the form 
\[
\Pbcal_{S, S'}(\bo A, \bo b) = \{ \bo x \in \R^n : \bo x \geq 0, \ (S \bo A S') \bo x + S \bo b \geq 0 \} \enspace, 
\]
where $S = \diag(s_1, \dots, s_m)$, $S' = \diag(s'_1, \dots, s'_n)$, and the $s_i$ and $s'_j$ are i.i.d.\ entries with values in $\{\pm 1\}$ such that each of them is equal to $+1$ (resp.\ $-1$) with probability $1/2$. Equivalently, the $2^{m+n}$ polyhedra of the form $\Pbcal_{S, S'}(\bo A, \bo b)$ are equiprobable. 

\begin{theorem}[{\cite{AdlerKarpShamir87}}]\label{th:adler}
For any fixed choice of $\bo A$ and $\bo b$ such that no minor of $(\bo A  \  \bo b)$ is null, provided that $\bo \eps$ is sufficiently small, the total number of basic points visited by $\PCBC(S \bo A S',S \bo b)$ is bounded by $O(\min(m^2, n^2))$ on average. 
\end{theorem}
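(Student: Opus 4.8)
The statement is exactly the main theorem of~\cite{AdlerKarpShamir87}, and the plan is to reproduce the structure of their argument: a geometric reduction of the pivot count to the size of two-dimensional shadows, followed by a symmetry-driven estimate of the expected shadow size.

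\textbf{Geometric reduction.} Fix $S$ and $S'$, and write $\Pbcal^{(k-1)}$ for the polyhedron built from the first $n+k-1$ inequalities of $\bo x \geq 0$, $S\bo A S'\bo x + S\bo b \geq 0$. Consider the $k$-th iteration of the outer loop of $\PCBC$. When the current point $\bar{\bo x}$ violates the $k$-th inequality, the algorithm runs the shadow-vertex simplex method over $\Pbcal^{(k-1)}$ with objective $\bo u$ and co-objective $\transpose{(\bo A_k)}$. By the geometric interpretation of the shadow-vertex rule (see~\cite[Chapter~1.3]{Borgwardt1987}), the basic points it visits are precisely the preimages of the consecutive vertices of the polygon $\pi_k(\Pbcal^{(k-1)})$ that are met while the supporting line turns from the direction of $\bo u$ towards that of $-\transpose{(\bo A_k)}$, where $\pi_k$ denotes the orthogonal projection of $\R^n$ onto the plane $\Pi_k := \mathrm{span}(\bo u, \transpose{(\bo A_k)})$. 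Hence the number of pivots performed at iteration $k$ is at most the number of vertices of the shadow polygon $\pi_k(\Pbcal^{(k-1)})$, and the total number of basic points visited by $\PCBC(S\bo A S', S\bo b)$ is bounded by $\sum_{k=1}^{m} |\mathrm{vert}\,\pi_k(\Pbcal^{(k-1)})|$.

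\textbf{Expected shadow size.} It thus remains to bound $\mathbb{E}_{S,S'}[\,|\mathrm{vert}\,\pi_k(\Pbcal^{(k-1)})|\,]$ for each $k$ and to sum. A vertex of $\pi_k(\Pbcal^{(k-1)})$ is the $\pi_k$-image of a vertex of $\Pbcal^{(k-1)}$, which corresponds to a choice $B$ of $n$ among the $n+k-1$ defining inequalities. By linearity of expectation the quantity equals $\sum_{B} \Pr_{S,S'}[\,B \text{ is feasible and its vertex lies on the shadow}\,]$, the sum ranging over the $\binom{n+k-1}{n}$ candidate bases. Feasibility of $B$ is controlled by the $k-1$ inequalities outside $B$, each carrying an independent sign $s_i$; and the vertex of $B$ projects to a vertex of $\pi_k(\Pbcal^{(k-1)})$ exactly when the solution of the Cramer system attached to $B$, $\bo u$ and $\transpose{(\bo A_k)}$ (that is, the relevant reduced-cost pair in the description underlying~\eqref{eq:shadow_vertex}) has a prescribed sign pattern. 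The hypothesis that the $2^{m+n}$ polyhedra $\Pbcal_{S,S'}(\bo A,\bo b)$ are equiprobable is precisely what makes these probabilities computable: after conditioning on the signs irrelevant to a fixed $B$ and averaging over the remaining ones, the shadow-membership test reduces to a monotone one-parameter sign count along the projection direction, and a careful summation over $B$ yields $\mathbb{E}_{S,S'}[\,|\mathrm{vert}\,\pi_k(\Pbcal^{(k-1)})|\,] = O(\min(k,n))$, together with a cap of $O(n^2)$ uniform in $k$. Summing these per-phase estimates over $k = 1, \dots, m$ then gives the claimed $O(\min(m^2, n^2))$ bound; the details of both the probability computation and the summation are carried out in~\cite{AdlerKarpShamir87}.

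\textbf{Main obstacle.} The whole difficulty lies in the probabilistic estimate above, because a single sign flip $s_i$ has three coupled effects: it changes whether the $i$-th inequality belongs to the basis $B$, it displaces the basic point when $i \in B$, and it modifies the reduced-cost signs that decide shadow membership. Disentangling these dependencies — done in~\cite{AdlerKarpShamir87} by an adapted conditioning that isolates each sign variable and collapses the shadow-vertex test to a monotone sign pattern — is the delicate core of the proof; the geometric reduction and the final summation are routine by comparison.
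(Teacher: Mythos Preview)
The paper does not give its own proof of this theorem: it is stated as a citation of~\cite{AdlerKarpShamir87} and used as a black box (the only additional remark being that the argument transfers to any real-closed field). Your proposal is a faithful high-level sketch of the Adler--Karp--Shamir argument and, like the paper, ultimately defers the substantive probabilistic computation to~\cite{AdlerKarpShamir87}; so there is nothing to compare and your treatment is consistent with the paper's.
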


It can be verified that the proof of Theorem~\ref{th:adler} is still valid when we replace $\R$ by any real-closed field $K$. Alternatively, it can be shown that Theorem~\ref{th:adler} can be expressed as a first-order sentence, so that Tarski's principle can be used to transpose it to any real-closed field. As a consequence of Proposition~\ref{prop:trop_pcbc}, the algorithm $\tropPCBC$ should visit a quadratic number of tropical basic points on average. This is the way we translate the result of Adler~\etal{} to mean payoff games. 
The probability distribution of games is expressed over their payments matrices $A, B$, and must satisfy the following requirements:
\begin{assumption}\label{ass:model}
\begin{enumerate}[(i)]
\item\label{item:cond3} for all $i \in [m]$ (resp.\ $j \in [n]$), the distribution of the matrices $A, B$ is invariant by the exchange of the $i$-th row (resp.\ $j$-th column) of $A$ and $B$.
\item\label{item:cond1} almost surely, $A_{ij}$ and $B_{ij}$ are distinct and not equal to $\zero$ for all $i \in [m]$, $j \in [n]$. In this case, we introduce the signed matrix $W = (W_{ij}) \in \strop^{m \times n}$, defined by $W_{ij} := A_{ij}$ if $A_{ij} > B_{ij}$, and $\tminus B_{ij}$ if $A_{ij} < B_{ij}$.
\item\label{item:cond2} almost surely, the matrix $W$ is strongly non-degenerate.
\end{enumerate}
\end{assumption}
Let us briefly discuss the requirements of Assumption~\ref{ass:model}.

Condition~\eqref{item:cond3} corresponds to the flip invariance property. It handles discrete distributions (see Figure~\ref{fig:flip}) as well as continuous ones. In particular, if the distribution of the payment matrices admits a density function $f$, Condition~\eqref{item:cond3} can be expressed as the invariance of $f$ by exchange operations on its arguments. For instance, if $m = 1$ and $n = 2$, the flip invariance holds if, and only if, for almost all  $a_{ij}, b_{ij}$,
$f(a_{1,1}, a_{1,2}, b_{1,1}, b_{1,2}) = 
f(b_{1,1}, b_{1,2}, a_{1,1}, a_{1,2}) = 
f(b_{1,1}, a_{1,2}, a_{1,1}, b_{1,2}) =
f(a_{1,1}, b_{1,2}, b_{1,1}, a_{1,2})$.

The requirements $A_{ij}, B_{ij} \neq \zero$ for all $i,j$ in Condition~\eqref{item:cond1} ensure that the flip operations always provide games in which the two players have at least one action to play from every position. The matrix $W$ can be intuitively thought of as a tropical subtraction ``$A \tminus B$'', and the conditions $A_{ij} \neq B_{ij}$ ensure that $W$ is well defined. Then, the following result holds: 
\begin{lemma}\label{lemma:trop_pcbc_game}
If $A_{ij} \neq B_{ij}$ for all $i,j$, and $W$ is defined as in Condition~\eqref{item:cond1} of Assumption~\ref{ass:model}, then the initial state $n$ is winning in the game with matrices $A, B$ if, and only if, $\tropPCBC(W_{[m] \times [n-1]}, W_{[m] \times \{n\}})$ returns ``Non-empty''.
\end{lemma}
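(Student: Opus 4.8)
The plan is to chain three facts: the tropical characterization of winning initial states of Akian, Gaubert and Guterman~\cite{AkianGaubertGuterman2012}, a purely combinatorial rewriting of the feasibility system~\eqref{eq:trop_ineq} in terms of the signed matrix $W$, and the correctness of $\tropPCBC$ established in Proposition~\ref{prop:trop_pcbc}. By~\cite{AkianGaubertGuterman2012}, the initial state $n$ is winning in the game with matrices $A, B$ if and only if~\eqref{eq:trop_ineq} admits a solution $x \in \trop^{n-1}$. Extending any candidate $x$ by $x_n := \unit$, the $i$-th inequality of~\eqref{eq:trop_ineq} reads $\tsum_{j \in [n]} A_{ij} \tdot x_j \geq \tsum_{j \in [n]} B_{ij} \tdot x_j$, so it suffices to prove that this family of inequalities, for $i \in [m]$, is feasible over $\trop^{n-1}$ exactly when $\tropPCBC(W_{[m] \times [n-1]}, W_{[m] \times \{n\}})$ returns ``Non-empty''.

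The combinatorial core is the claim that, under the hypothesis $A_{ij} \neq B_{ij}$ for all $i, j$, the $i$-th inequality above is equivalent to $\tsum_{j \in [n]} W^+_{ij} \tdot x_j \geq \tsum_{j \in [n]} W^-_{ij} \tdot x_j$. Since all entries are distinct, $[n]$ splits as $P_i \disjcup N_i$ with $P_i := \{ j : A_{ij} > B_{ij} \}$ and $N_i := \{ j : A_{ij} < B_{ij} \}$, and by definition of $W$ one has $W^+_{ij} = A_{ij}$ for $j \in P_i$ (and $\zero$ otherwise) and $W^-_{ij} = B_{ij}$ for $j \in N_i$ (and $\zero$ otherwise); hence the right-hand inequality is exactly $\tsum_{j \in P_i} A_{ij} \tdot x_j \geq \tsum_{j \in N_i} B_{ij} \tdot x_j$. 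Writing $\alpha_P, \alpha_N$ (resp.\ $\beta_P, \beta_N$) for the tropical sum of the $A$-terms (resp.\ $B$-terms) over $P_i$ and over $N_i$, one has $\alpha_P \geq \beta_P$ and $\beta_N \geq \alpha_N$ term by term, while the $i$-th inequality of~\eqref{eq:trop_ineq} reads $\max(\alpha_P, \alpha_N) \geq \max(\beta_P, \beta_N)$. A short two-way argument shows that this holds if and only if $\alpha_P \geq \beta_N$: the converse direction uses that $\alpha_P \geq \beta_N$ and $\alpha_P \geq \beta_P$ give $\alpha_P \geq \max(\beta_P, \beta_N)$, and the forward direction splits on whether $\alpha_N < \beta_N$ (then $\max(\alpha_P, \alpha_N) \geq \beta_N > \alpha_N$ forces $\alpha_P \geq \beta_N$) or $\alpha_N = \beta_N$ (which, because $A_{ij} < B_{ij}$ for $j \in N_i$, is only possible when $\alpha_N = \beta_N = \zero$, so that $\alpha_P \geq \beta_N$ trivially). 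Separating the index $j = n$, where $x_n = \unit$, and collecting over $i \in [m]$, this turns the system into $(W_{[m] \times [n-1]})^+ \tdot x \tplus (W_{[m] \times \{n\}})^+ \geq (W_{[m] \times [n-1]})^- \tdot x \tplus (W_{[m] \times \{n\}})^-$ with $x \in \trop^{n-1}$, which is precisely the constraint system of the tropical polyhedron tested by $\tropPCBC(W_{[m] \times [n-1]}, W_{[m] \times \{n\}})$.

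To finish, I would apply Proposition~\ref{prop:trop_pcbc} with $A := W_{[m] \times [n-1]}$ and $b := W_{[m] \times \{n\}}$: the matrix $(A \ b) = W$ is strongly non-degenerate by Condition~\eqref{item:cond2} of Assumption~\ref{ass:model}, and the scalar $\eps$ need not be chosen explicitly since it is treated symbolically as a sufficiently small quantity (Remark~\ref{remark:eps}). Therefore $\tropPCBC(W_{[m] \times [n-1]}, W_{[m] \times \{n\}})$ returns ``Non-empty'' exactly when that tropical polyhedron is nonempty, and combining this with the reduction of the previous paragraph and the characterization of~\cite{AkianGaubertGuterman2012} yields the statement. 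The only genuinely delicate point is the combinatorial equivalence of the second paragraph, and even there the sole subtlety is bookkeeping the degenerate cases where an index set $P_i$ or $N_i$ is empty or the relevant coordinates of $x$ equal $\zero$; the first and third steps are direct invocations of the quoted results.
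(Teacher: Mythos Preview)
Your proof is correct and follows the same three-step chain as the paper: the winning characterization from~\cite{AkianGaubertGuterman2012}, the combinatorial equivalence between the $(A,B)$-system and the $W$-system, and the correctness of $\tropPCBC$ from Proposition~\ref{prop:trop_pcbc}. The only difference is in the middle step: the paper argues by a one-variable-at-a-time elimination principle (for $a\neq c$, $\max(a+x_1,b)\geq\max(c+x_1,d)$ reduces to dropping the term with the smaller coefficient), whereas you group all indices at once into $P_i$ and $N_i$ and compare the four aggregates $\alpha_P,\alpha_N,\beta_P,\beta_N$ directly---this is a presentational variation rather than a genuinely different route.
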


\begin{proof}
Given $a, b, c, d \in \trop$ such that $a \neq c$, it can be easily proved that the inequality $\max(a + x_1, b) \geq \max(c+x_1, d)$ over $x_1$ is equivalent to $b \geq \max(c+x_1,d)$ if $a < c$, and $\max(a+ x_1, b) \geq d$ if $a > c$. Using this principle, we deduce that the two systems $A_{[m] \times [n-1]} \tdot x \tplus A_{[m] \times \{n\}} \geq B_{[m] \times [n-1]} \tdot x \tplus B_{[m] \times \{n\}}$ and $W_{[m] \times [n-1]}^+ \tdot x \tplus W_{[m] \times \{n\}}^- \geq W_{[m] \times [n-1]}^- \tdot x \tplus W_{[m] \times \{n\}}^-$ are equivalent. As a result, by~\cite[Theorem~3.2]{AkianGaubertGuterman2012}, the initial state $n$ is winning if, and only if, the tropical polyhedron defined by the latter system is non-empty. This provides the expected result, thanks to the first part of Proposition~\ref{prop:trop_pcbc}.
\end{proof}

Finally, Condition~\eqref{item:cond2} is the tropical counterpart of the non-degeneracy assumption used in~\cite{AdlerKarpShamir87} to establish the average-case complexity bound. 

We point out that the set of matrices $A, B$ which do not satisfy the requirements stated in Conditions~\eqref{item:cond1} and~\eqref{item:cond2} has measure zero. As a consequence, these two conditions do not impose important restrictions on the distribution of $A, B$, and they can rather be understood as genericity conditions.

We are now ready to establish our polynomial bound on the average-case complexity of mean payoff games. 
\begin{theorem}\label{th:game_complexity}
Under a distribution satisfying Assumption~\ref{ass:model}, $\tropPCBC$ determines in polynomial time on average whether an initial state is winning for Player Max in the mean payoff game with payment matrices $A, B$. 
\end{theorem}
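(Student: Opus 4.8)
The plan is to assemble three results already established: the reduction of the game to a tropical feasibility problem solved by $\tropPCBC$ (Lemma~\ref{lemma:trop_pcbc_game} together with the correctness part of Proposition~\ref{prop:trop_pcbc}), the identity between the number of tropical basic points visited by $\tropPCBC$ and the number of classical basic points visited by $\PCBC$ on a lift (Proposition~\ref{prop:trop_pcbc}), and the Adler--Karp--Shamir average-case bound (Theorem~\ref{th:adler}) transferred to the real closed field $\puiseux$. After relabelling the chosen initial state as $n$ (a permutation of the columns of $A,B$, which preserves all of Assumption~\ref{ass:model}), deciding whether it is winning reduces to running $\tropPCBC(A, W_{[m]\times\{n\}})$ with $A := W_{[m]\times[n-1]}$, where $W\in\strop^{m\times n}$ is the signed matrix of Condition~\eqref{item:cond1} of Assumption~\ref{ass:model}. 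Conditions~\eqref{item:cond1} and~\eqref{item:cond2} hold with probability one, so it suffices to bound the expected running time conditionally on that event; there $W$, and hence every submatrix of it, is strongly non-degenerate. Write $N(W)$ for the total number of tropical basic points visited.

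The core step is to recognize the flip-invariance hypothesis as the sign-randomization model of Theorem~\ref{th:adler}. Flipping row $i$ of the game replaces each pair $(A_{ik},B_{ik})$ by $(B_{ik},A_{ik})$, which preserves the modulus $|W_{ik}|=\max(A_{ik},B_{ik})$ and negates $\sign(W_{ik})$, hence amounts to negating the $i$-th row of $W$; symmetrically, flipping column $j$ amounts to negating the $j$-th column of $W$. By Condition~\eqref{item:cond3} of Assumption~\ref{ass:model}, the conditional law of $W$ is therefore invariant under the group $H$ of sign changes $W\mapsto SWS'$ generated by the $m$ row negations and the first $n-1$ column negations, where $S=\diag(s_1,\dots,s_m)$ and $S'=\diag(s'_1,\dots,s'_{n-1},1)$ with $s_i,s'_j\in\{\pm1\}$. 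On the strongly non-degenerate event this action is free (if $SWS'=W$ with all $W_{ij}\neq\zero$, then $S$ and $S'$ must be scalar, and the fixed last column forces them trivial), so conditionally on the $H$-orbit $\Omega$ of $W$ the matrix $W$ is uniform over the $2^{m+n-1}$ matrices $\{SW_0S':S,S'\}$, for any representative $W_0\in\Omega$. Writing $W_0=(A_0\ b_0)$ with $A_0=(W_0)_{[m]\times[n-1]}$, these are precisely the instances $\Pbcal_{S,S'}(A_0,b_0)$ of Theorem~\ref{th:adler}.

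Next I would transfer the average-case bound to $\puiseux$. Fix $W_0\in\Omega$ and a lift $\bo W_0\in\sval^{-1}(W_0)$; since $W_0$ is strongly non-degenerate, Lemma~\ref{lemma:poly_sign} applied to $\tdet$ shows no minor of $\bo W_0=(\bo A_0\ \bo b_0)$ vanishes, so Theorem~\ref{th:adler} applies to $(\bo A_0,\bo b_0)$ over the real closed field $\puiseux$ (as remarked after its statement). For $h=(S,S')\in H$ the matrix $hW_0$ lifts to $h\bo W_0$, and $\sval(\bo\eps)=\eps$ for a lift $\bo\eps$; choosing $\eps$, and $\bo\eps\in\sval^{-1}(\eps)$, sufficiently small — one choice serves the whole orbit, since the thresholds in Proposition~\ref{prop:trop_pcbc} and Lemma~\ref{lemma:eps}, as well as the smallness required in Theorem~\ref{th:adler}, depend only on $|W|$ (hence on $\Omega$) — Proposition~\ref{prop:trop_pcbc} gives $N(hW_0)$ equal to the number of basic points visited by $\PCBC(S\bo A_0S',S\bo b_0)$. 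Averaging over $h\in H$ and invoking Theorem~\ref{th:adler} (the linear program has $m$ inequalities and $n-1$ variables) yields $\mathbb E[N(W)\mid\Omega]=O(\min(m^2,n^2))$, whence $\mathbb E[N(W)]=O(\min(m^2,n^2))$ by the tower rule.

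It remains to bound the cost per visited basic point. At a basic point with current basis $(I,J)$ during iteration $k$, the relevant matrix is $\begin{psmallmatrix}A_{[k-1]}\\\transpose u\\ A_k\end{psmallmatrix}$, a submatrix of $\begin{psmallmatrix}A\\\transpose u\end{psmallmatrix}$, which is strongly non-degenerate by Lemma~\ref{lemma:eps} once $\eps$ is small; hence Theorem~\ref{th:gs_trop} applies and the leaving variable is computed in $O(n^4)$ (the dependency on $\eps$ being handled symbolically in the underlying assignment problems, cf.\ Remark~\ref{remark:eps}). Testing whether the current edge contains the next point costs $O(n(m+n^2))$, a call to $\Call{Pivot}{(I,J), \vout}$ — at most one per visited basic point — costs $O(mn(m+n^2))$, and the $m$ outer feasibility tests cost $O(mn)$ altogether. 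Thus the running time is $O\big(N(W)\,(n^4+mn(m+n^2))\big)+O(mn)$, whose expectation is $O\big(\min(m^2,n^2)(n^4+m^2n+mn^3)\big)$, a polynomial in $m$ and $n$; correctness is exactly Lemma~\ref{lemma:trop_pcbc_game}, valid on the probability-one event above. The main obstacle is the second paragraph: matching flip invariance to the Adler--Karp--Shamir sign model, in particular checking that the extra freedom in the sign of the right-hand side is harmless and that conditioning on the orbit yields precisely the uniform law, which relies on the free action of $H$ under strong non-degeneracy. The remaining bookkeeping — lifting to $\puiseux$, a uniform choice of $\eps$ per orbit, and the per-iteration cost — is routine given Proposition~\ref{prop:trop_pcbc} and Theorems~\ref{th:gs_trop} and~\ref{th:adler}.
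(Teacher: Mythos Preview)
Your proof is correct and follows essentially the same route as the paper: reduce to $\tropPCBC$ via Lemma~\ref{lemma:trop_pcbc_game}, identify flip operations with the sign changes $W\mapsto S\tdot W\tdot S'$ (with the last column's sign fixed), lift to $\puiseux$ and invoke Theorem~\ref{th:adler} via Proposition~\ref{prop:trop_pcbc}, then use flip invariance to average. The only cosmetic difference is that you phrase the averaging as ``condition on the $H$-orbit, observe $W$ is uniform there, apply the tower rule,'' whereas the paper writes the equivalent identity $\mathbb{E}[T]=\frac{1}{2^{m+n-1}}\mathbb{E}\bigl[\sum_{s,s'}T_{s,s'}\bigr]$ directly from $\mathbb{E}[T]=\mathbb{E}[T_{s,s'}]$; the paper's formulation does not need your freeness check (if the stabilizer were nontrivial, each orbit element would simply be counted with equal multiplicity in the sum over $(s,s')$, leaving the average unchanged), though your argument for freeness is correct and harmless.
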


\begin{proof}
Without loss of generality, we assume that the initial state is the circle node $n$. 

Let us fix two payment matrices $A, B$ satisfying Conditions~\eqref{item:cond1} and~\eqref{item:cond2} of Assumption~\ref{ass:model}, and let $W$ as defined in Condition~\eqref{item:cond1}. Starting from the pair $(A, B)$ of matrices, the successive applications of row/column exchange operations precisely yield $2^{m+n-1}$ different pairs of matrices. In particular, without loss of generality, we can assume that the $n$-th columns of $A$ and $B$ have not been switched. Then, the pair of matrices that we obtained are of the form $(A_{s,s'}, B_{s, s'})$, where $s \in \{\unit, \tminus \unit\}^m$, $s' \in \{\unit, \tminus \unit\}^{n-1}$, and $A_{s,s'}$ and $B_{s,s'}$ are the matrices obtained from $A$ and $B$ respectively, by exchanging the rows $i$ and the columns $j$ such that $s_i = \tminus \unit$ and $s'_j = \tminus \unit$. The $(i,j)$-entries of $A_{s,s'}$ and $B_{s,s'}$ are distinct, and so we can define a matrix $W_{s,s'}$ in the same way we have built $W$ from $A$ and $B$. Observe that $(W_{s,s'})_{[m] \times [n-1]} = S \tdot W_{[m] \times [n-1]} \tdot S'$ and $(W_{s,s'})_{[m] \times \{n\}} = S \tdot W_{[m] \times \{n\}}$, where $S := \tdiag(s_1, \dots, s_m)$ and $S' := \tdiag(s'_1, \dots, s'_{n-1})$. Thus, by Lemma~\ref{lemma:trop_pcbc_game}, the $2^{m+n-1}$ games obtained by the successive flipping operations can be solved by calling the algorithm $\tropPCBC(S \tdot W_{[m] \times [n-1]} \tdot S', S \tdot W_{[m] \times \{n\}})$. 

Let $\bo W \in \sval^{-1}(W)$ be a fixed lift of $W$. Thanks to Condition~\eqref{item:cond2}, no minor of $\bo W$ is null. Besides, as explained in Remark~\ref{remark:eps}, we do not explicitly manipulate the scalar $\eps$ in the algorithm $\tropPCBC$. Instead, we use a symbolic technique which simulates the behavior of the tropical shadow-vertex rule for any choice of $\eps$ small enough. This ensures that for all $\bm \eps \in \sval^{-1}(\eps)$, the conditions of Theorem~\ref{th:adler} are satisfied on the instance $(\bo W_{[m] \times [n-1]}, \bo W_{[m] \times \{n\}})$. Note that the $2^{n+m-1}$ instances  $(\bo S \bo W_{[m] \times [n-1]} \bo S', \bo S \bo W_{[m] \times \{n\}})$, where $\bo S, \bo S'$ are diagonal matrices with diagonal coefficients in $\{\pm 1\}$, are respectively lifts of the instances $(S \tdot W_{[m] \times [n-1]} \tdot S', S \tdot W_{[m] \times \{n\}})$. Thanks to the second part of Proposition~\ref{prop:trop_pcbc} and Theorem~\ref{th:adler}, it follows that the total number of visited basic points when calling the algorithm $\tropPCBC(S \tdot W_{[m] \times [n-1]} \tdot S', S \tdot W_{[m] \times \{n\}})$ for all $s \in \{\unit, \tminus \unit\}^m$, $s' \in \{\unit, \tminus \unit\}^{n-1}$ is bounded by $O(2^{m+n-1} \min(m^2, n^2))$. Moreover, every iteration of the tropical simplex algorithm at Line~\lineref{line:simplex} of $\tropPCBC$ consists in determining the leaving variable returned by the tropical shadow-vertex rule, and pivoting to the next basis or computing the point $\bar{x}'$. The complexity of the former step is $O(n^3)$ by the second part of Theorem~\ref{th:gs_trop}, and the complexity of the latter step is $O(m n(m + n^2))$. Hence, every iteration is polynomial time. In total, solving the $2^{m+n-1}$ games associated with the matrices $(A_{s,s'}, B_{s,s'})$ can be done in time $O(2^{m+n-1} m n(m + n^2) \min(m^2, n^2))$. 

Let $T$ be the random variable corresponding to the time complexity of our method to solve the game with payment matrices $A,B$ drawn from a distribution satisfying Assumption~\ref{ass:model}. Similarly, given $s \in \{\unit, \tminus \unit\}^m$, $s' \in \{\unit, \tminus \unit\}^{n-1}$, let $T_{s,s'}$ be the random variable representing the time complexity to solve the game with matrices $A_{s,s'}, B_{s,s'}$, where $A, B$ are drawn from the latter distribution. Thanks to Condition~\eqref{item:cond3}, $\mathbb{E}[T] = \mathbb{E}[T_{s,s'}]$ for all $s,s'$, and so:
\[
\mathbb{E}[T] = \frac{1}{2^{m+n-1}} \times \mathbb{E} \Big[\sum_{s,s'} T_{s,s'}\Big] \leq \frac{1}{2^{m+n-1}} \times (K 2^{m+n-1} m n(m + n^2) \min(m^2, n^2))
\]
for a certain constant $K > 0$. This concludes the proof.
\end{proof}

\section{Conclusion}

We have defined a tropical analogue of the shadow-vertex simplex algorithm, and shown that every iteration has polynomial time complexity. As a corollary, we have established a polynomial-time average-case result on mean payoff games, based on the analysis of Adler, Karp and Shamir of the classical shadow-vertex algorithm. 
The main restriction of the model is the flip invariance property. It is an open question whether the tropical approach can be applied with other probabilistic models. In particular, it would be interesting to transfer smoothed complexity results, \eg~\cite{SpielmanTeng2004}, to the tropical setting.
The results of Section~\ref{sec:shadow_vertex} also suggest that there is a general method to tropicalize any semi-algebraic pivoting rule, based on the characterization of the Newton polytopes involved. This will be addressed in a future work.

\subsubsection*{Acknowledgments.} The authors thank the anonymous reviewers for their helpful comments which contributed to improve the presentation of the paper. We also thank Michael Joswig for many insightful discussions.

\bibliographystyle{alpha}
\newcommand{\etalchar}[1]{$^{#1}$}

\appendix

\end{document}